\documentclass[twocolumn]{autart}    

\usepackage{graphicx}        
\usepackage{epstopdf} 
\usepackage{amsmath,amssymb,amsfonts}
\usepackage{subfig}
\usepackage{tikz}
\usepackage{pgfplots}
\pgfplotsset{compat=1.18}
\usetikzlibrary{arrows.meta,calc,decorations.markings,3d}
\usetikzlibrary{positioning}
\newtheorem{theorem}{Theorem}[section]
\newtheorem{lemma}[theorem]{Lemma}
\newtheorem{proposition}[theorem]{Proposition}

\theoremstyle{definition}
\newtheorem{definition}[theorem]{Definition}
\newtheorem{remark}[theorem]{Remark}

\providecommand{\qedsymbol}{$\square$}

\newenvironment{proof}[1][Proof]{%
  \par\noindent{\itshape #1.\ }\ignorespaces
}{%
  \unskip\nobreak\hfill\qedsymbol\par
}

\newcommand{\R}{\mathbb{R}}
\newcommand{\CP}{\mathbb{CP}}

\newcommand{\grad}{\operatorname{grad}}
\newcommand{\frakg}{\mathfrak{g}}

\newcommand{\ket}[1]{\lvert #1 \rangle}
\newcommand{\bra}[1]{\langle #1 \rvert}

\begin{document}

\begin{frontmatter}

\title{Quantum Riemannian Cubics with Obstacle Avoidance for Quantum Geometric Model Predictive Control \thanksref{footnoteinfo}} 

\thanks[footnoteinfo]{The author acknowledge financial support from Grant PID2022-137909NB-C21 funded by MCIN/AEI/ 10.13039/501100011033. The research leading to these results was supported in part by iRoboCity2030-CM, Robótica Inteligente para Ciudades Sostenibles (TEC-2024/TEC-62).}

\author[LC]{Leonardo J. Colombo}\ead{leonardo.colommbo@csic.es}    

\address[LC]{Centre for Automation and Robotics (CSIC-UPM), Ctra. M300 Campo Real, Km 0,200, Arganda del Rey, Madrid, Spain.} 

\begin{keyword}                           
Geometric model predictive control; Quantum control; Riemannian cubics; Variational integrators.               
\end{keyword}                             

\begin{abstract}
We propose a geometric model predictive control framework for quantum
systems subject to smoothness and state constraints. By formulating
quantum state evolution intrinsically on the projective Hilbert space,
we penalize covariant accelerations to generate smooth trajectories in
the form of Riemannian cubics, while incorporating state-dependent
constraints through potential functions. A structure-preserving
variational discretization enables receding-horizon implementation, and
a Lyapunov-type stability result is established for the closed-loop
system. The approach is illustrated on the Bloch sphere for a two-level
quantum system, providing a viable pathway toward predictive feedback
control of constrained quantum dynamics.
\end{abstract}

\end{frontmatter}
\section{Introduction}

From a geometric viewpoint, pure quantum states are naturally described as
elements of a reduced configuration space obtained by quotienting out the
physically irrelevant global phase. This reduction leads to the projective
Hilbert space $\mathbb{P}(\mathcal H)$, which can be identified with a complex
projective manifold and, in finite dimensions, with a homogeneous space of the
unitary group. This intrinsic formulation provides a coordinate-free
description of quantum state evolution that is consistent with the modern
geometric treatment of quantum control systems. Such a perspective has been
widely adopted in the control literature, where closed quantum systems are
modeled as bilinear control systems evolving on Lie groups or homogeneous
spaces, with controllability and reduction properties characterized in terms of
the associated Lie algebra structure
\cite{928587,altafini2002,altafiniTAC2012,boscain2006}.

Within this geometric setting, quantum control problems can be cast as
continuous-time optimal control problems on reduced state manifolds. The
resulting dynamics admit variational and Hamiltonian formulations analogous to
those encountered in classical nonlinear control, including Hamilton--Jacobi--Bellman
(HJB) characterizations of optimal feedback laws
\cite{nurdin2009,6314989}. While the exact solution of the quantum HJB equation
is generally intractable, recent work has shown that approximate
learning-based methods can be developed while preserving stability and
optimality guarantees \cite{dongpetersen2023,wadi2026}.

From both practical and geometric viewpoints, regularization of quantum control
trajectories is essential in order to mitigate experimental imperfections,
bandwidth limitations, and decoherence effects \cite{mirrahimi2005,nurdin2009}.
Penalizing covariant acceleration leads naturally to higher-order variational
problems posed directly on the reduced quantum state space, in which the
critical trajectories are Riemannian cubics
\cite{brody2012quantum,abrunheiro2018general,noakes1989cubic,camarinha2001geometry}.
These curves provide intrinsically smooth interpolants compatible with the
underlying Riemannian geometry.

Practical quantum control problems are also inherently constrained, as certain
regions of the quantum state space may correspond to noise-sensitive or
physically undesirable configurations. Rather than imposing hard state
constraints, such limitations are more naturally encoded through smooth
state-dependent penalties \cite{khatib1986real}. Within a
geometric framework, this leads to Riemannian cubics with obstacle avoidance
\cite{bloch2017variational}, yielding smooth and constraint-aware quantum
trajectories.

While open-loop solutions of these variational problems provide valuable
insight, robust operation in the presence of disturbances and uncertainty
requires feedback. This motivates the use of model predictive control (MPC), in
which finite-horizon optimal control problems are solved repeatedly in a
receding-horizon fashion. Extending MPC to quantum systems evolving on nonlinear
state manifolds, however, poses significant
conceptual and computational challenges.

In this work, we introduce a unified geometric framework for quantum trajectory
generation and control on the reduced state space of pure quantum states. The
dynamics are formulated intrinsically on the projective Hilbert space endowed
with the Fubini--Study metric, and higher-order smoothness is enforced through a
second-order variational principle. We develop a structure-preserving Lie group
variational discretization of the resulting dynamics and embed the proposed
framework into a geometric model predictive control scheme for quantum systems.
A Lyapunov-type stability result is established for the resulting closed-loop
dynamics, and the approach is illustrated on the Bloch sphere for a two-level
quantum system.

The remainder of the paper is organized as follows. Section~2 reviews the
geometric preliminaries required to formulate higher-order variational problems
on curved state spaces. Section~3 introduces Riemannian cubics with obstacle
avoidance in a general geometric setting. Section~4 specializes the formulation
to quantum systems on the Bloch sphere. Sections~5 and~6 develop the Lie group
formulation of the quantum dynamics and derive structure-preserving variational
integrators. Finally, Section~7 introduces the geometric model predictive
control framework, establishes closed-loop stability results, and demonstrates
the effectiveness of the approach in a receding-horizon setting.

\section{Riemannian manifolds}\label{Sec: background}

We briefly recall the minimal geometric notions required to formulate
second-order variational problems on curved configuration spaces. Let $(Q,g)$ be an $n$-dimensional Riemannian manifold, where $g$ is a
positive-definite symmetric covariant $2$-tensor field. We denote by
$\langle \cdot,\cdot\rangle_q := g_q(\cdot,\cdot)$ the associated inner
product on $T_qQ$. The
associated norm of a tangent vector $v_q\in T_qQ$ is denoted by
$\|v_q\|_g=\langle v_q,v_q\rangle_q^{1/2}$.

The Riemannian metric induces a unique torsion-free and metric-compatible
connection $\nabla$, known as the Levi--Civita connection. This
connection allows for the intrinsic differentiation of vector fields
and plays a central role in defining higher-order derivatives along
curves. Given two smooth vector fields $X$ and $Y$ on $Q$, the covariant
derivative $\nabla_X Y$ generalizes the directional derivative to curved
spaces. The curvature tensor associated with $\nabla$ is defined by $R(X,Y)Z := \nabla_X\nabla_Y Z - \nabla_Y\nabla_X Z - \nabla_{[X,Y]}Z$, and measures the noncommutativity of covariant derivatives.

Let $q:I\subset\mathbb R\to Q$ be a smooth curve with velocity
$\dot q(t)\in T_{q(t)}Q$. The Levi--Civita connection induces a covariant
derivative along the curve, denoted by $\frac{D}{dt}$, which assigns to
any vector field $W$ along $q$ a new vector field $\frac{DW}{dt}$ along
$q$. Higher-order covariant derivatives along $q$ are defined
inductively by $\displaystyle{\frac{D^k}{dt^k}W := \frac{D}{dt}\!\left(\frac{D^{k-1}}{dt^{k-1}}W\right)}$, for $k\in\mathbb N$. In particular, the covariant acceleration of the curve $q$ is given by
$\frac{D}{dt}\dot q$, and the vanishing of this quantity characterizes
geodesics.

If $Q$ is complete, the Hopf--Rinow theorem \cite{Boothby} guarantees that any two
points in $Q$ can be connected by a minimizing geodesic. The associated
Riemannian distance is defined by the length of such curves. Geodesics
also define the Riemannian exponential map
$\exp_q:T_qQ\to Q$, which maps an initial velocity to the endpoint of the
corresponding geodesic at unit time. In a neighborhood of $q$, the
distance can be expressed as
$d(q,y)=\|\exp_q^{-1}y\|_g$.

All notions introduced above are metric-dependent but otherwise
completely general. In the remainder of the paper, we specialize to
Riemannian manifolds arising in quantum mechanics, most notably complex
projective spaces endowed with the Fubini--Study metric \cite{provost1980}. In the qubit
case, this metric is equivalent, up to a constant factor, to the standard
round metric on the two-sphere. Since constant rescalings of the metric
do not affect the Levi--Civita connection or the form of the geodesic
equations up to reparametrization, we will freely use either
representation when convenient.

\section{Riemannian Cubics with Obstacle Avoidance}

Let $(Q,g)$ be an $n$-dimensional Riemannian manifold and denote by $\nabla$ the corresponding Levi--Civita connection and by $R$ the Riemannian curvature tensor. Let $\gamma : [0,T] \to Q$ be a smooth curve, we denote by $D_t$ the covariant derivative along $\gamma$, as introduced
in Section~\ref{Sec: background}. The covariant acceleration of $\gamma$ is defined as $D_t \dot\gamma(t) \in T_{\gamma(t)}Q$. Riemannian cubics \cite{noakes1989cubic}, \cite{camarinha2001geometry}, are critical points of
the functional
\begin{equation}
\label{eq:RCfunctional}
  J_0(\gamma) = \int_0^T \frac{1}{2}\,\|D_t \dot\gamma(t)\|_g^2 \, dt
\end{equation}
subject to fixed boundary data
$(\gamma(0), \dot\gamma(0))$ and $(\gamma(T), \dot\gamma(T))$. To model obstacle avoidance, one introduces a smooth potential
$V : Q \to \R$ that blows up or becomes large when $\gamma(t)$ approaches
forbidden regions  (see \cite{bloch2017variational} for details). This leads to:

\begin{definition}
Given $(Q,g)$ and a smooth obstacle avoidance potential $V : Q \to \R$, a
\emph{Riemannian cubic with obstacle avoidance} is a curve
$\gamma : [0,T] \to Q$ that is a critical point of
\begin{equation}
\label{eq:RCfunctionalObstacle}
  J(\gamma) = \int_0^T \left[
    \frac{1}{2} \|D_t\dot\gamma(t)\|_g^2 + V(\gamma(t))
  \right] dt
\end{equation}
under variations that fix both $\gamma(0),\dot\gamma(0)$ and
$\gamma(T),\dot\gamma(T)$.
\end{definition}

We recall the intrinsic equations from \cite{bloch2017variational} satisfied by critical curves of
\eqref{eq:RCfunctionalObstacle}.

\begin{theorem}
\label{thm:ELgeneral}
Let $(Q,g)$ be a Riemannian manifold with Levi--Civita connection $\nabla$,
and let $V : Q \to \R$ be smooth. A smooth curve $\gamma : [0,T] \to Q$ with
fixed boundary values $(\gamma(0),\dot\gamma(0))$ and
$(\gamma(T),\dot\gamma(T))$ is a critical point of
\eqref{eq:RCfunctionalObstacle} if and only if it satisfies
\begin{equation}
\label{eq:ELgeneral}
  D_t^3 \dot\gamma
  + R\!\left(D_t\dot\gamma, \dot\gamma\right)\dot\gamma
  + \operatorname{grad} V(\gamma) = 0,
\end{equation}
where $D_t^3 \dot\gamma := D_t ( D_t ( D_t \dot\gamma))$ and
$\operatorname{grad} V$ is the Riemannian gradient of $V$, defined
implicitly by the relation $g_q\!\left(\operatorname{grad} V(q),\, v_q\right)=
  \mathrm{d}V(q)[v_q]$, $\forall\, v_q \in T_q Q$, where $\mathrm{d}V(q):T_qQ\to\mathbb{R}$ is the differential of $V$ at $q$.

\end{theorem}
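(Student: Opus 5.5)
We follow the standard first-variation argument for second-order variational problems on Riemannian manifolds. Take a smooth one-parameter variation $\gamma_s(t)=\Gamma(s,t)$ of $\gamma=\gamma_0$ with variational field $W(t)=\partial_s\Gamma(0,t)$. The boundary conditions fix $\gamma(0),\dot\gamma(0),\gamma(T),\dot\gamma(T)$. Hence $W(0)=W(T)=0$ and $D_tW(0)=D_tW(T)=0$, because $D_s\partial_t\Gamma=D_t\partial_s\Gamma$ by torsion-freeness. Conversely, every smooth $W$ along $\gamma$ with these four vanishing conditions arises from an admissible variation, for instance $\Gamma(s,t)=\exp_{\gamma(t)}(sW(t))$.

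We then differentiate $J(\gamma_s)$ at $s=0$ under the integral sign. The potential term gives $\int_0^T \mathrm{d}V(\gamma)[W]\,dt=\int_0^T\langle \grad V,W\rangle dt$ by the definition of the Riemannian gradient. For the acceleration term, metric compatibility gives $\frac{d}{ds}\frac12\|D_t\dot\gamma_s\|^2=\langle D_sD_t\partial_t\Gamma, D_t\partial_t\Gamma\rangle$. To move $D_s$ past $D_t$ we use the curvature identity $D_sD_tZ-D_tD_sZ=R(\partial_s\Gamma,\partial_t\Gamma)Z$, which follows from the definition of $R$ and $[\partial_s,\partial_t]=0$. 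Applying it with $Z=\partial_t\Gamma$, and using $D_s\partial_t\Gamma=D_t\partial_s\Gamma$, yields
\begin{equation*}
D_sD_t\dot\gamma\big|_{s=0}=D_t^2W+R(W,\dot\gamma)\dot\gamma .
\end{equation*}

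Next we integrate $\int_0^T\langle D_t^2W,D_t\dot\gamma\rangle dt$ by parts twice. The boundary terms $\langle D_tW,D_t\dot\gamma\rangle$ and $\langle W,D_t^2\dot\gamma\rangle$ vanish at $t=0,T$, so this integral becomes $\int_0^T\langle W,D_t^3\dot\gamma\rangle dt$. For the curvature term, the pair symmetry of the Riemann tensor gives
\begin{equation*}
\langle R(W,\dot\gamma)\dot\gamma,D_t\dot\gamma\rangle=\langle R(D_t\dot\gamma,\dot\gamma)\dot\gamma,W\rangle .
\end{equation*}
We check the sign convention carefully here against the paper's definition of $R$, since this is where a sign error would appear. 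Collecting terms,
\begin{equation*}
\frac{d}{ds}\Big|_{s=0}J(\gamma_s)=\int_0^T\big\langle D_t^3\dot\gamma+R(D_t\dot\gamma,\dot\gamma)\dot\gamma+\grad V(\gamma),\,W\big\rangle dt .
\end{equation*}

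Sufficiency is immediate from this formula: if \eqref{eq:ELgeneral} holds, the first variation vanishes for every admissible $W$. For necessity, which we expect to be the only mildly delicate step, we apply the fundamental lemma of the calculus of variations with a field vanishing to first order at the endpoints. Denote the bracketed expression by $E(t)$ and let $\phi$ be a smooth bump function on $[0,T]$ with $\phi$ and $\phi'$ vanishing at $0$ and $T$, and $\phi>0$ in the interior. Take $W=\phi E$. This is admissible since $D_tW=\phi' E+\phi D_tE$ vanishes at the endpoints, and it gives $\int_0^T\phi\|E\|^2dt=0$. Hence $E\equiv0$ on $(0,T)$, and by continuity on $[0,T]$, which is \eqref{eq:ELgeneral}.
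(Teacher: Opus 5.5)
Your proposal is correct and complete. It is the standard first-variation derivation (commutation $D_sD_t-D_tD_s=R(\partial_s\Gamma,\partial_t\Gamma)$, two integrations by parts, pair symmetry, and the fundamental lemma with $W=\phi E$), which is how the result is obtained in \cite{bloch2017variational}; the paper gives no proof of its own and only cites that reference. One small correction to where you placed your caution: the identity $\langle R(W,\dot\gamma)\dot\gamma,D_t\dot\gamma\rangle=\langle R(D_t\dot\gamma,\dot\gamma)\dot\gamma,W\rangle$ holds under either sign convention, so the convention-sensitive step is actually the commutation formula, and that formula is correct for the paper's definition $R(X,Y)Z=\nabla_X\nabla_YZ-\nabla_Y\nabla_XZ-\nabla_{[X,Y]}Z$.
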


\begin{remark}
When $V\equiv 0$, \eqref{eq:ELgeneral} reduces to the familiar equation for
Riemannian cubics \cite{noakes1989cubic}, \cite{camarinha2001geometry}:
\[
  D_t^3 \dot\gamma + R\!\left(D_t\dot\gamma, \dot\gamma\right)\dot\gamma = 0.
\]
The potential adds a ``force'' term $\operatorname{grad} V(\gamma)$ that
pushes the curve away from the obstacle region.
\end{remark}

The formulation above provides a geometric regularization mechanism for
trajectory generation on nonlinear manifolds. In the context of control,
penalizing covariant acceleration corresponds to limiting rapid
variations of the underlying control action, while the potential term
encodes state-dependent constraints in a smooth manner. These features
will play a central role in the development of structure-preserving
discretizations and predictive feedback control schemes in the
following sections.

\section{Quantum Riemannian Cubics with Obstacle Avoidance on the Bloch Sphere}

Next, we study variational trajectory generation problems for the evolution of pure quantum states of a two-level system (qubit), in the presence of forbidden or undesirable regions in state space. Our goal is to construct smooth state
trajectories that minimise covariant acceleration, while avoiding prescribed
obstacles, in a manner intrinsic to the geometry of quantum mechanics \cite{provost1980}, \cite{anandan1990}.

Let $\mathbb{C}^2$ denote the two-dimensional complex Hilbert space endowed with
the standard Hermitian inner product. The space of pure qubit states is defined
as the associated projective Hilbert space
$\mathbb{P}(\mathbb{C}^2)
  :=
  (\mathbb{C}^2 \setminus \{0\}) / \mathbb{C}^\ast$, where $\mathbb{C}^\ast := \mathbb{C}\setminus\{0\}$, so that physically
irrelevant global phase factors are identified. This space is
isomorphic to the complex projective line $\CP^1$, which admits a 
geometrically intuitive realisation as the two-dimensional sphere $S^2$,
commonly known as the Bloch sphere. Throughout this section, we therefore
identify $S^2 \subset \mathbb{R}^3$ with the space of pure qubit states and
endow it with the standard \emph{round metric}, namely the Riemannian metric
induced by the Euclidean inner product of $\mathbb{R}^3$.

An equivalent representation of pure states is provided by the density-matrix
formalism. Any pure qubit state can be represented by a rank-one orthogonal
projector of the form $\rho=\ket{\psi}\bra{\psi}$, which satisfies
$\rho^2=\rho$ and $\operatorname{Tr}(\rho)=1$. In this representation, pure
states admit the Bloch decomposition $\rho = \frac{1}{2}\left(\mathbf{1} + \vec n\cdot \vec\sigma\right)$, $\vec n\in \mathbb{R}^3,\ \|\vec n\|=1$, where $\mathbf{1}$ denotes the $2\times2$ identity matrix and
$\vec\sigma=(\sigma_x,\sigma_y,\sigma_z)$ are the Pauli matrices. The unit vector
$\vec n$, known as the \emph{Bloch vector}, provides a one-to-one identification
between $\CP^1$ and the sphere $S^2$. Under this identification, a pure qubit state admits the parametrisation $\ket{\psi}
  =
  \cos(\theta/2)\ket{0}
  + e^{i\phi}\sin(\theta/2)\ket{1}$, where $(\theta,\phi)$ are spherical coordinates associated with the
corresponding Bloch vector $\vec n\in S^2$. The interpretation of this
representation is illustrated in Figure~\ref{fig:bloch_sphere_basic}.


\begin{figure}[h!]
  \centering
  \begin{tikzpicture}[scale=1.8]
    \shade[ball color=white,opacity=0.8] (0,0) circle (1);

    \draw[->] (-1.3,0) -- (1.3,0) node[right] {$x$};
    \draw[->] (0,-1.3) -- (0,1.3) node[above] {$z$};
    \draw[->,dashed] (-0.9,-0.9) -- (0.9,0.9) node[right] {$y$};

    \fill (0,1) circle (0.03) node[above left] {$\ket{0}$};
    \fill (0,-1) circle (0.03) node[below left] {$\ket{1}$};

    \coordinate (O) at (0,0);
    \coordinate (R) at (0.6,0.8);
    \draw[thick,->,blue] (O) -- (R) node[above right] {$\vec n$};

    \draw[thin] (0.55,0.0) arc[start angle=0,end angle=53,radius=0.55];
    \node at (0.7,0.2) {$\theta$};
  \end{tikzpicture}
  \caption{Bloch representation of a qubit state
           $\rho = \frac{1}{2}\left(\mathbf{1} + \vec n\cdot \vec\sigma\right)$.
           Pure states lie on the surface ($\|\vec n\|=1$), while mixed states correspond to points
in the interior ($\|\vec n\|<1$).}
  \label{fig:bloch_sphere_basic}
\end{figure}
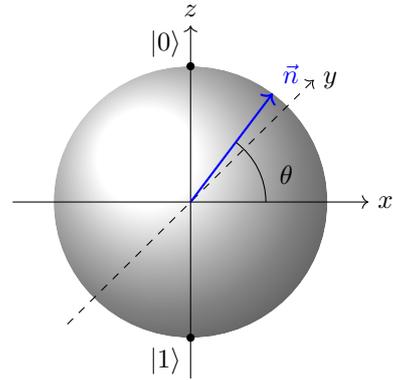

Let $\vec n(t)\in S^2$ be a smooth curve. Since $\|\vec n(t)\|=1$, differentiation yields $\vec n\cdot\dot{\vec n}=0$,
and hence $\dot{\vec n}(t)\in T_{\vec n(t)}S^2$.The Levi--Civita connection on $S^2$ associated with the round metric can be
expressed extrinsically by projecting the Euclidean derivative onto the tangent
space via the orthogonal projection $P(\vec n)$. Let $P(\vec n) = I - \vec n\,\vec n^\top$ denote the orthogonal projector onto
$T_{\vec n}S^2$ where $I$ is the
identity on $\mathbb{R}^3$. Thus $P(\vec n)$ acts on vectors in $\mathbb{R}^3$ and satisfies
$P(\vec n)\vec n = 0$ and $P(\vec n)^2 = P(\vec n)$. Next, we compute the covariant acceleration on $S^2$:

\begin{lemma}
\label{lem:Dtndot}
Let $\vec n : [0,T]\to S^2$ be a smooth curve. The covariant derivative
$D_t\dot{\vec n}$ along $\vec n$ is given by
\begin{equation}
\label{eq:Dtndot}
  D_t\dot{\vec n}
  = \ddot{\vec n}
  + (\dot{\vec n}\cdot\dot{\vec n})\,\vec n.
\end{equation}
\end{lemma}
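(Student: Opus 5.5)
The plan is to use the extrinsic description of the Levi--Civita connection of the round metric recalled just before the statement. Since $S^2\subset\R^3$ carries the induced metric, its Levi--Civita connection is the tangential part of the ambient Euclidean derivative (Gauss formula). Hence for any tangent vector field $W$ along $\vec n$ we have $D_tW = P(\vec n)\,\dot W$. Applying this to $W=\dot{\vec n}$ gives
\[
  D_t\dot{\vec n} = P(\vec n)\ddot{\vec n} = \ddot{\vec n} - (\vec n\cdot\ddot{\vec n})\,\vec n .
\]
If one prefers not to quote the Gauss formula, I would check directly that $W\mapsto P(\vec n)\dot W$ is a covariant derivative along $\vec n$ that is compatible with the round metric and torsion-free. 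Uniqueness of the Levi--Civita connection then identifies it with $D_t$. Metric compatibility holds because $\frac{d}{dt}(W_1\cdot W_2)=\dot W_1\cdot W_2+W_1\cdot\dot W_2$, and the normal components of $\dot W_i$ are annihilated when dotted with the tangent vectors $W_j$. Torsion-freeness follows from the symmetry of mixed partials in a two-parameter family, after projecting.

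Next I will rewrite the normal component using the constraint. Differentiating $\vec n\cdot\vec n=1$ once gives $\vec n\cdot\dot{\vec n}=0$, as already noted in the text. Differentiating again gives $\dot{\vec n}\cdot\dot{\vec n}+\vec n\cdot\ddot{\vec n}=0$, so $\vec n\cdot\ddot{\vec n}=-(\dot{\vec n}\cdot\dot{\vec n})$. Substituting this into the expression above yields $D_t\dot{\vec n}=\ddot{\vec n}+(\dot{\vec n}\cdot\dot{\vec n})\,\vec n$, which is \eqref{eq:Dtndot}.

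The only step needing justification is the identification $D_t=P(\vec n)\frac{d}{dt}$ on tangent fields. The rest is a two-line computation. I will handle it by the uniqueness argument sketched above. Alternatively, I can cite the standard fact that for a Riemannian submanifold of Euclidean space the induced connection is the tangential projection of the ambient one.
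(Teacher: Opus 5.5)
Your proposal is correct and follows the same route as the paper: identify $D_t\dot{\vec n}$ with the tangential projection $P(\vec n)\ddot{\vec n}$, then substitute $\vec n\cdot\ddot{\vec n}=-\|\dot{\vec n}\|^2$, obtained by differentiating $\|\vec n\|^2=1$ twice. The only difference is that you sketch a justification of the projection identity, via the Gauss formula or uniqueness of the Levi--Civita connection, whereas the paper simply takes it as known.
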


\begin{proof}
Since $\|\vec n\|=1$, differentiating twice we obtain
$\vec n\cdot\dot{\vec n}=0$ and $\vec n\cdot\ddot{\vec n} =
- \|\dot{\vec n}\|^2$. The covariant derivative of the tangent vector
$\dot{\vec n}$ is the tangential component of $\ddot{\vec n}$:
$D_t\dot{\vec n} = P(\vec n)\ddot{\vec n}
  = \ddot{\vec n} - (\ddot{\vec n}\cdot \vec n)\,\vec n
  = \ddot{\vec n} + (\dot{\vec n}\cdot\dot{\vec n})\,\vec n$.
\end{proof}

For any $\vec n\in S^2$ and $a,b,c\in T_{\vec n}S^2$, the Riemann curvature
tensor satisfies (see \cite{Boothby} for instance) $R(a,b)c = (b\cdot c)\,a - (a\cdot c)\,b$.

For a smooth obstacle avoidance potential $V:S^2\to\R$, we consider the action functional
\begin{equation}
\label{eq:actionS2}
  J[\vec n] = \int_0^T \left[
    \frac{1}{2}\|D_t\dot{\vec n}(t)\|^2 + V(\vec n(t))
  \right] dt,
\end{equation}

\begin{definition}
A curve $\vec n : [0,T]\to S^2$ is called a
\emph{quantum Riemannian cubic with obstacle avoidance} if it is a critical
point of \eqref{eq:actionS2} under variations that fix both
$(\vec n(0),\dot{\vec n}(0))$ and $(\vec n(T),\dot{\vec n}(T))$.
\end{definition}

For boundary conditions we prescribe $\vec n(0) = \vec n_0$, $\dot{\vec n}(0) = \vec v_0$, $\vec n(T) = \vec n_T$, $\dot{\vec n}(T) = \vec v_T$, with $\vec n_0,\vec n_T\in S^2$ and $\vec v_0,\vec v_T$ tangent to the
sphere at the corresponding points. By Theorem \ref{thm:ELgeneral} and Lemma \ref{lem:Dtndot}, we can write
the quantum Riemannian cubics with obstacle avoidance on the Bloch sphere explicitly.

\begin{proposition}
\label{prop:explicitS2}
A smooth curve $\vec n : [0,T]\to S^2$ is a quantum Riemannian cubic with obstacle avoidance if 
\begin{equation}
\label{eq:explicitODE}
  P(\vec n)\Big(
   \ddddot{\vec n}
   + 2\|\dot{\vec n}\|^2 \ddot{\vec n}
   + 4(\dot{\vec n}\cdot\ddot{\vec n})\dot{\vec n}
   + \|\ddot{\vec n}\|^2 \vec n
   + \nabla V(\vec n)
  \Big) = 0,
\end{equation}
where $\nabla V$ is the Euclidean gradient of an extension of $V$ to a neighbourhood
of $S^2$ in $\R^3$.
\end{proposition}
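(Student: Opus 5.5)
The plan is to specialise Theorem~\ref{thm:ELgeneral} to $Q=S^2$ with the round metric. Each intrinsic term in \eqref{eq:ELgeneral} will be written extrinsically as the projection under $P(\vec n)$ of an expression in $\R^3$. The only tool beyond Lemma~\ref{lem:Dtndot} is the general rule for covariant differentiation of a tangent field $W$ along $\vec n$ (so $W\cdot\vec n=0$):
\[
D_tW=P(\vec n)\dot W=\dot W-(\dot W\cdot\vec n)\vec n=\dot W+(W\cdot\dot{\vec n})\vec n .
\]
In addition, $P(\vec n)$ kills every multiple of $\vec n$, so such terms may be dropped freely inside a projection. This is why the term $\|\ddot{\vec n}\|^2\vec n$ in \eqref{eq:explicitODE} is harmless: it is annihilated by $P(\vec n)$ and is included only for cosmetic reasons. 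For the same reason, the Riemannian gradient is $\operatorname{grad}V=P(\vec n)\nabla V$ for any extension of $V$, since $\mathrm dV(\vec n)[v]=\nabla V\cdot v=P(\vec n)\nabla V\cdot v$ for tangent $v$.

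First, I iterate the covariant derivative three times, starting from $A:=D_t\dot{\vec n}=\ddot{\vec n}+\|\dot{\vec n}\|^2\vec n$ given by Lemma~\ref{lem:Dtndot}. I will repeatedly use the identities obtained by differentiating $\|\vec n\|=1$:
\[
\vec n\cdot\dot{\vec n}=0,\qquad \vec n\cdot\ddot{\vec n}=-\|\dot{\vec n}\|^2,\qquad \vec n\cdot\dddot{\vec n}=-3\,\dot{\vec n}\cdot\ddot{\vec n}.
\]
Differentiating $A$ and projecting gives $D_tA=P(\vec n)\dddot{\vec n}+\|\dot{\vec n}\|^2\dot{\vec n}$. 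Differentiating again requires the derivative of the projector: $\frac{d}{dt}\big(P(\vec n)\dddot{\vec n}\big)$ projects to $P(\vec n)\ddddot{\vec n}-(\dddot{\vec n}\cdot\vec n)\dot{\vec n}=P(\vec n)\ddddot{\vec n}+3(\dot{\vec n}\cdot\ddot{\vec n})\dot{\vec n}$. The other piece, $\frac{d}{dt}(\|\dot{\vec n}\|^2\dot{\vec n})$, projects to $2(\dot{\vec n}\cdot\ddot{\vec n})\dot{\vec n}+\|\dot{\vec n}\|^2P(\vec n)\ddot{\vec n}$. Combining these,
\[
D_t^3\dot{\vec n}=P(\vec n)\big(\ddddot{\vec n}+\|\dot{\vec n}\|^2\ddot{\vec n}+5(\dot{\vec n}\cdot\ddot{\vec n})\dot{\vec n}\big).
\]

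Next, I evaluate the curvature term with $R(a,b)c=(b\cdot c)a-(a\cdot c)b$, using $A\cdot\dot{\vec n}=\ddot{\vec n}\cdot\dot{\vec n}$:
\[
R(A,\dot{\vec n})\dot{\vec n}=\|\dot{\vec n}\|^2P(\vec n)\ddot{\vec n}-(\dot{\vec n}\cdot\ddot{\vec n})\dot{\vec n}.
\]
Adding this to $D_t^3\dot{\vec n}$ and to $\operatorname{grad}V=P(\vec n)\nabla V$ yields exactly the coefficients $2\|\dot{\vec n}\|^2$ and $4(\dot{\vec n}\cdot\ddot{\vec n})$ in \eqref{eq:explicitODE}. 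Hence \eqref{eq:explicitODE} is equivalent to \eqref{eq:ELgeneral}, and Theorem~\ref{thm:ELgeneral} gives the claim.

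The main obstacle is bookkeeping in the second and third derivatives. One must correctly account for the time derivative of $P(\vec n)$, or equivalently for the normal components that the projection reintroduces, and use the identity $\vec n\cdot\dddot{\vec n}=-3\,\dot{\vec n}\cdot\ddot{\vec n}$. Everything else is mechanical.
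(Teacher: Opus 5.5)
Your proof is correct and takes the same route as the paper: you specialise Theorem~\ref{thm:ELgeneral} to $S^2$ and write $D_t^3\dot{\vec n}$, the curvature term and $\operatorname{grad}V$ as tangential projections of Euclidean expressions. Your intermediate formula $D_t^3\dot{\vec n}=P(\vec n)\big(\ddddot{\vec n}+\|\dot{\vec n}\|^2\ddot{\vec n}+5(\dot{\vec n}\cdot\ddot{\vec n})\dot{\vec n}\big)$ is the correct one, whereas the paper's display \eqref{eq:Dt3ndot} already contains the coefficients $2$ and $4$, i.e.\ it equals $D_t^3\dot{\vec n}+R(D_t\dot{\vec n},\dot{\vec n})\dot{\vec n}$, so adding the curvature term afterwards as the paper's text says would count it twice; your bookkeeping is therefore cleaner than the paper's.
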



\begin{proof}
Let $\vec n:[0,T]\to S^2\subset\R^3$ be a smooth curve and denote
$\vec v=\dot{\vec n}$, $\vec a=\ddot{\vec n}$. Since $\|\vec n\|=1$, we have
\begin{equation}\label{eq:basicS2ids}
\vec n\cdot \vec v=0,\qquad
\vec n\cdot \vec a=-\|\vec v\|^2,\qquad
\frac{d}{dt}(\vec n\cdot \vec a)=\vec v\cdot \vec a.
\end{equation}
With the round metric on $S^2$, the Levi--Civita covariant derivative along
$\vec n(t)$ is the tangential projection of the Euclidean derivative, hence
\begin{equation}\label{eq:Dtndot_recall}
D_t\vec v=P(\vec n)\vec a=\vec a+(\|\vec v\|^2)\vec n.
\end{equation}

We now compute the terms in \eqref{eq:ELgeneral} for $M=S^2$. First,
differentiating \eqref{eq:Dtndot_recall} in $\R^3$ gives $\displaystyle{\frac{d}{dt}(D_t\vec v)
=\dot{\vec a}+2(\vec v\cdot \vec a)\vec n+\|\vec v\|^2\vec v}$, and applying the tangential projection again yields
\begin{equation}\label{eq:Dt2ndot}
D_t^2\vec v
=P(\vec n)\Big(\dot{\vec a}+2(\vec v\cdot \vec a)\vec n+\|\vec v\|^2\vec v\Big).
\end{equation}
Differentiating once more and projecting, one obtains after grouping terms
(and using \eqref{eq:basicS2ids} repeatedly) 
\begin{equation}\label{eq:Dt3ndot}
D_t^3\vec v
=P(\vec n)\Big(
\ddddot{\vec n}
+2\|\vec v\|^2\vec a
+4(\vec v\cdot \vec a)\vec v
+\|\vec a\|^2\vec n
\Big).
\end{equation}


Using the curvature tensor formula of the unit sphere, taking
$a=D_t\vec v$, $b=\vec v$, $c=\vec v$ gives $R(D_t\vec v,\vec v)\vec v
=(\vec v\cdot\vec v)D_t\vec v-(D_t\vec v\cdot\vec v)\vec v$. Since $D_t\vec v=\vec a+\|\vec v\|^2\vec n$ and $\vec n\cdot\vec v=0$, 
$D_t\vec v\cdot\vec v=\vec a\cdot\vec v$, hence
\begin{equation}\label{eq:curvterm}
R(D_t\vec v,\vec v)\vec v
=\|\vec v\|^2(\vec a+\|\vec v\|^2\vec n)-(\vec v\cdot\vec a)\vec v.
\end{equation}

Finally, the intrinsic gradient of $V$ on $S^2$ is the tangential projection of
the Euclidean gradient of any smooth extension of $V$ to a neighbourhood of
$S^2$, i.e.
\begin{equation}\label{eq:gradproj}
\operatorname{grad}V(\vec n)=P(\vec n)\,\nabla V(\vec n).
\end{equation}
Substituting \eqref{eq:Dt3ndot}, \eqref{eq:curvterm}, and \eqref{eq:gradproj}
into \eqref{eq:ELgeneral}, and collecting terms inside $P(\vec n)$, yields
\eqref{eq:explicitODE}. The projection ensures the equation is intrinsic, i.e.
takes values in $T_{\vec n}S^2$.
\end{proof}

\begin{remark}
The constraint $\|\vec n(t)\|=1$ is preserved by \eqref{eq:explicitODE},
provided it holds at $t=0$ with $\vec n(0)\cdot\dot{\vec n}(0)=0$.
Indeed, differentiating $\|\vec n\|^2$ and using the fact that
the right-hand side of \eqref{eq:explicitODE} is tangent yields a
consistent evolution on $S^2$.
\end{remark}

 Given a smooth curve $\vec n:[0,T]\to S^2$, we introduce the augmented state
$(\vec n,\vec v,\vec a,\vec j)$ defined by $\vec v := \dot{\vec n}$,
  $\vec a := \ddot{\vec n}$ and $\vec j := \dddot{\vec n}$. The constraints $\|\vec n\|=1$ and $\vec n\cdot\vec v=0$ imply that, once the
equation of motion is imposed, the variables $\vec v$, $\vec a$ and $\vec j$
remain tangent to the sphere, i.e.\ $\vec v,\vec a,\vec j\in T_{\vec n}S^2$. Equation~\eqref{eq:explicitODE} can be rewritten on the
augmented state space as
\begin{align}
  \dot{\vec n} &= \vec v,\quad 
  \dot{\vec v} = \vec a,\quad
  \dot{\vec a} = \vec j,\quad
  \dot{\vec j} = F(\vec n,\vec v,\vec a,\vec j),
  \label{eq:sys4}
\end{align}
where the function $F$ is obtained by solving \eqref{eq:explicitODE} for the
fourth derivative $\ddddot{\vec n}$ and expressing all remaining terms in
terms of $(\vec n,\vec v,\vec a,\vec j)$ and $\nabla V(\vec n)$. Since $V$ is smooth on $S^2$, the right-hand side of \eqref{eq:sys4} is smooth.

The projection $P(\vec n)$ in \eqref{eq:explicitODE} ensures that the dynamics
is intrinsic to the sphere. In particular, if $\|\vec n(0)\|=1$ and
$\vec n(0)\cdot\vec v(0)=0$, then these constraints are preserved along the
flow: $\vec n(t)\in S^2$ and $\vec n(t)\cdot\vec v(t)=0$ for all times for which
the solution exists. This follows by differentiating $\|\vec n\|^2$ and
$\vec n\cdot\vec v$ and using the tangential nature of the right-hand side. Since the resulting first-order system \eqref{eq:sys4} defines
a smooth vector field, standard ODE theory guarantees local existence and
uniqueness of solutions for compatible initial data.

\begin{remark} The boundary value problem with fixed data
$(\vec n(0),\vec v(0))$ and $(\vec n(T),\vec v(T))$ is variational in nature and
is more delicate than the corresponding initial value problem. While local
existence and uniqueness of solutions to the system \eqref{eq:sys4} follow from ODE theory, the existence
of minimisers of the action functional \eqref{eq:actionS2} relies on compactness
and lower semicontinuity arguments. Such existence results for Riemannian
cubics with obstacle potentials are available in the literature; see, \cite{goodman15local}. 
\end{remark}

\subsection{Avoidance of undesired quantum states}\label{sec:obstacle}

We model forbidden configurations in the qubit state space by designing a smooth artificial potential that penalises proximity to
undesired quantum states. In the Bloch ball, a point on $S^2$
corresponds to a pure quantum state, and avoiding a ``point obstacle'' should
therefore be interpreted as suppressing the occupation of a specific
(unwanted) pure state or, more generally, of a leakage or escape channel in
state space.

We fix a reference undesired pure state, represented in the Bloch sphere by an
arbitrary unit vector $\vec n_\star\in S^2$. Avoidance of $\vec n_\star$
amounts to steering the quantum trajectory away from this state while
maintaining smoothness of the evolution. Since the Bloch-sphere representation
is invariant under unitary transformations, any undesired pure state can be
mapped to the north pole by a suitable rotation in $SU(2)$. Therefore, without
loss of generality, we adopt the convenient representative $\vec n_\star=(0,0,1)$. In the standard Bloch-sphere convention, this choice corresponds to the
computational basis state $\ket{0}$, so that avoidance of $\vec n_\star$
amounts to penalising population of $\ket{0}$.

Let $\theta(\vec n)$ denote the geodesic distance between the current quantum
state $\vec n\in S^2$ and the undesired state $\vec n_\star$ with respect to
the round metric, namely $\theta(\vec n)
  = d_{S^2}(\vec n,\vec n_\star)
  = \arccos(\vec n\cdot \vec n_\star)\in[0,\pi]$. We introduce an obstacle avoidance potential of the form
\begin{equation}\label{eq:potential}
  V_{\tau,D,N}(\vec n)
  := \frac{\tau}{1+\left(\frac{\theta(\vec n)}{D}\right)^{2N}},
  \end{equation} where $\tau>0$ sets the strength of the energetic penalty, $D>0$ defines a
tolerance radius around the undesired state, and $N\in\mathbb N$ controls the sharpness of
the barrier.

This potential assigns a maximal cost $V_{\tau,D,N}(\vec n_\star)=\tau$ to the
undesired quantum state and decays rapidly away from it, effectively creating
a repulsive barrier around the corresponding Bloch direction. For sufficiently
large $N$, the potential approximates a hard exclusion of a neighbourhood of
$\vec n_\star$, while remaining smooth and compatible with variational
formulations. Such potentials can be used in quantum control to suppress
population of leakage states or escape channels without introducing
discontinuous control actions.

The parameter $D$ defines the geodesic radius (or sensing radius) of the forbidden
region around the undesired state, corresponding to a bound on the admissible
population of that state. The sharpness parameter $N$ controls how abruptly the
penalty increases when approaching this region: small values of $N$ yield a
soft repulsive barrier, while larger values approximate a hard exclusion of a
spherical cap, without sacrificing smoothness. Finally, the amplitude $\tau$
sets the relative weight of the obstacle penalty in the variational problem.

\begin{remark}
Let $|\psi_\star\rangle$ denote the undesired pure state associated with the
Bloch vector $\vec n_\star$, and let $\mathcal{P}_\star=|\psi_\star\rangle\langle\psi_\star|$
be the corresponding rank-one projector. For a pure state $|\psi\rangle$ with
Bloch vector $\vec n$, one has $\langle\psi|\mathcal{P}_\star|\psi\rangle
= \tfrac12\big(\mathbf{1}+\vec n\cdot\vec n_\star\big)
= \cos^2\!\big(\tfrac{\theta(\vec n)}{2}\big)$, where $\theta(\vec n)=d_{S^2}(\vec n,\vec n_\star)$.
Therefore, the obstacle potential $V_{\tau,D,N}(\vec n)$ can be interpreted as nonlinear penalty on the population of the undesired quantum state.
In this sense, avoidance of a point obstacle in the Bloch sphere is equivalent
to suppressing population of a leakage or escape state in quantum control.
\end{remark}

\begin{remark}
The family $V_{\tau,D,N}$ is an \emph{avoidance family} (in the sense of
artificial potential design for modified Riemannian cubics \cite{goodman2021variational}): for fixed $\tau$ and $D$,
as $N\to\infty$ we have $V_{\tau,D,N}(\vec n)\to\tau$ whenever $\theta(\vec n)<D$,
and $V_{\tau,D,N}(\vec n)\to 0$ whenever $\theta(\vec n)>D$, with
$V_{\tau,D,N}(\vec n)=\tau/2$ for all $\vec n$ such that $\theta(\vec n)=D$. This matches the qualitative requirements used
to obtain obstacle-avoidance guarantees for minimisers when the problem is feasible.
\end{remark}

To gain geometric and control-oriented intuition on the obstacle avoidance
mechanism, we visualise the artificial potential $V_{\tau,D,N}$ introduced in
\eqref{eq:potential}. Since this potential depends only on the geodesic
distance to the undesired state $\vec n_\star$, its level sets define
spherical caps in the Bloch sphere. Figure~\ref{fig:obstacle_potential} illustrates the geometry of the obstacle
potential and the effect of the sharpness parameter $N$.

\begin{figure}[h!]
\begin{center}
\includegraphics[height=6cm]{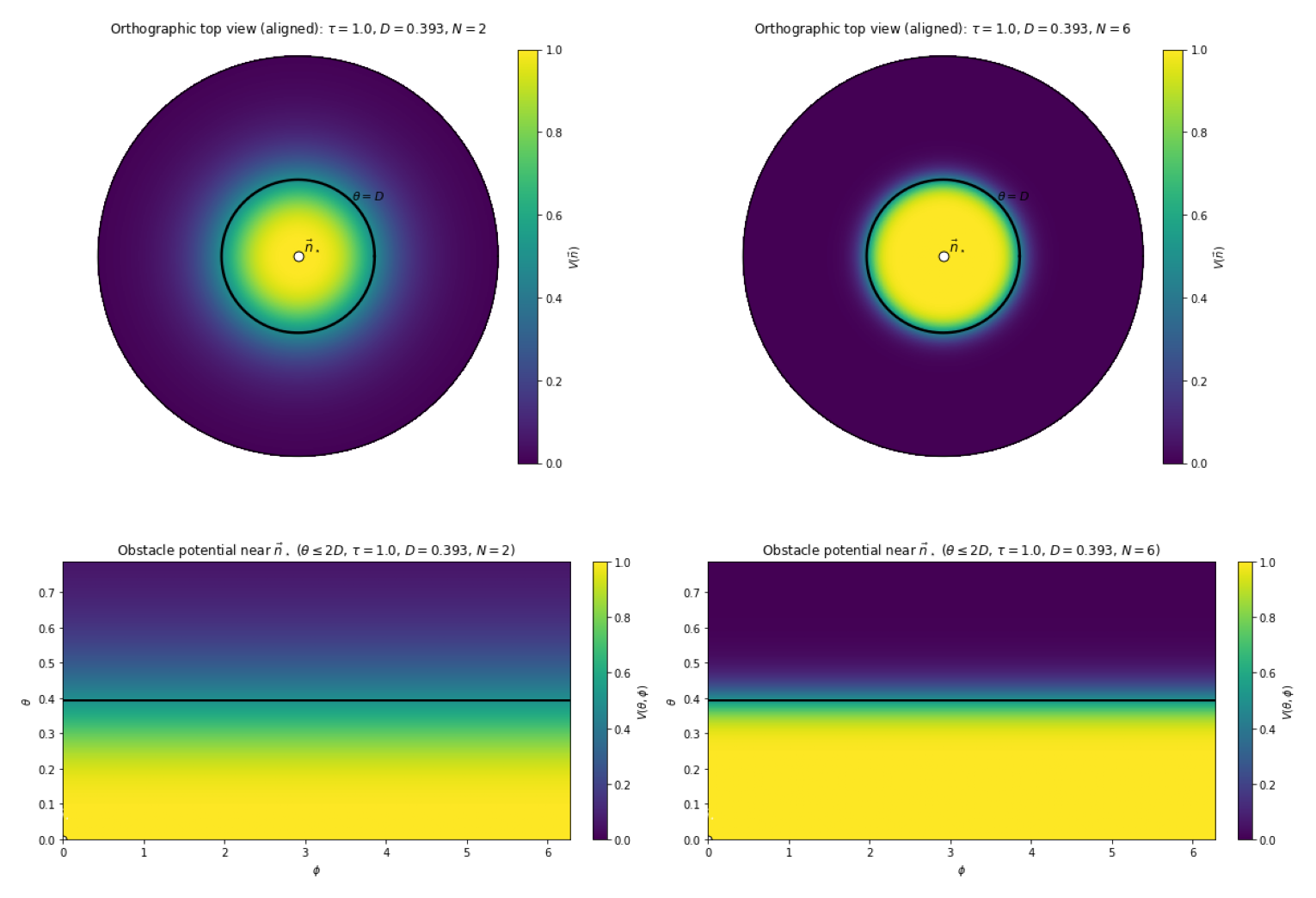}    
\caption{
Obstacle potential on the Bloch sphere.
\emph{Top row:} orthographic top views of the lifted obstacle potential
$V(\vec n)$, aligned with the obstacle direction $\vec n_o$, for fixed
$\tau = 1.0$ and $D = 0.393$, comparing the cases $N=2$ (left) and $N=6$
(right). The black contour indicates the level set $\theta = D$, defining
the effective obstacle boundary.
\emph{Bottom row:} meridional sections of the same potential in terms of
the polar angle $\theta$ for $\theta \leq 2D$, highlighting the increasing
steepness and localisation of the repulsive barrier as $N$ increases.
}
\label{fig:obstacle_potential}                                 
\end{center}                                 
\end{figure}

\medskip

\begin{lemma}\label{lem:gradV}
Assume $\vec n\cdot\vec n_\star \neq \pm1$ and let $u(\vec n):=\vec n\cdot \vec n_\star$.
Then
\begin{equation}\label{eq:gradtheta}
  \nabla \theta(\vec n)
  = -\frac{1}{\sqrt{1-u(\vec n)^2}}\,\vec n_\star,
\end{equation}
and
\begin{equation}\label{eq:dVdtheta}
  \frac{\partial V_{\tau,D,N}}{\partial \theta}
  = -\tau\,\frac{2N}{D^{2N}}\,
     \frac{\theta^{2N-1}}{\big(1+(\theta/D)^{2N}\big)^2}.
\end{equation}
Consequently, $\displaystyle{\nabla V_{\tau,D,N}(\vec n)
  = \frac{\partial V_{\tau,D,N}}{\partial \theta}\,\nabla\theta(\vec n)}$, where $\nabla$ denotes the Euclidean gradient of an extension to a neighbourhood
of $S^2\subset\R^3$. The Riemannian gradient is
\begin{equation}\label{eq:intrgradV}
  \grad_{S^2} V_{\tau,D,N}(\vec n) = P(\vec n)\,\nabla V_{\tau,D,N}(\vec n).
\end{equation}
\end{lemma}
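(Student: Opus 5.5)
The plan is a direct chain-rule computation, using the extension of $V_{\tau,D,N}$ determined by the formula $\theta(\vec n)=\arccos(\vec n\cdot\vec n_\star)$. This formula makes sense for every $\vec n\in\R^3$ with $|\vec n\cdot\vec n_\star|<1$, so we take it as the extension of $\theta$ to an open neighbourhood in $\R^3$ of each point under consideration. The hypothesis $u(\vec n)\neq\pm1$ is exactly what keeps us away from the two points where $\arccos$ fails to be differentiable.

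\textbf{Step 1: the gradient of $\theta$.} Since $u(\vec n)=\vec n\cdot\vec n_\star$ is linear, $\nabla u=\vec n_\star$. Using $\frac{d}{du}\arccos u=-(1-u^2)^{-1/2}$ on $(-1,1)$, the chain rule gives \eqref{eq:gradtheta} immediately.

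\textbf{Step 2: the $\theta$-derivative of $V_{\tau,D,N}$.} Write $V_{\tau,D,N}=\tau\,f(\theta)$ with $f(\theta)=(1+(\theta/D)^{2N})^{-1}$. Differentiating in $\theta$ gives
\[
f'(\theta)=-\big(1+(\theta/D)^{2N}\big)^{-2}\,\frac{2N\,\theta^{2N-1}}{D^{2N}}.
\]
This is \eqref{eq:dVdtheta}. The function $\theta\mapsto(\theta/D)^{2N}$ is a polynomial because $N\in\mathbb N$, so no smoothness issue arises on $(0,\pi)$.

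\textbf{Step 3: the Euclidean gradient.} Since $V_{\tau,D,N}$ factors through $\theta$, the formula $\nabla V_{\tau,D,N}=\frac{\partial V}{\partial\theta}\nabla\theta$ follows from the chain rule applied to the extension.

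\textbf{Step 4: the Riemannian gradient.} Formula \eqref{eq:intrgradV} is the same fact already used in \eqref{eq:gradproj}. For $v\in T_{\vec n}S^2$ we have
\[
\mathrm dV(\vec n)[v]=\nabla V\cdot v=(P(\vec n)\nabla V)\cdot v,
\]
because $P(\vec n)$ is the orthogonal projector and $v$ is tangent. Moreover $P(\vec n)\nabla V\in T_{\vec n}S^2$. By the defining relation of $\operatorname{grad}$ in Theorem~\ref{thm:ELgeneral}, this identifies $P(\vec n)\nabla V$ as the Riemannian gradient.

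\textbf{Main point of care.} The only delicate step is confirming that the result does not depend on the choice of extension. Step 4 settles this: any two extensions agree on $S^2$, so their differentials agree on tangent vectors. Hence the Euclidean gradients differ only by a multiple of $\vec n$, which $P(\vec n)$ annihilates. The Euclidean gradient itself, however, is extension-dependent, so the statement should be read relative to the $\arccos$ extension fixed at the start.
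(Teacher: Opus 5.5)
Your proof is correct and follows essentially the same route as the paper: $\nabla u=\vec n_\star$ together with the derivative of $\arccos$, direct differentiation of the profile in $\theta$, the chain rule, and tangential projection for the Riemannian gradient. Your check of the projection step through the defining relation of $\operatorname{grad}$ is accurate, as is your remark that \eqref{eq:gradtheta} depends on the chosen $\arccos$ extension while its projection does not; the paper states both points only briefly.
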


\begin{proof}
From the formula of the geodesic distance, $\theta(\vec n)=\arccos(u(\vec n))$ and
$\nabla u(\vec n)=\vec n_\star$. Using $\frac{d}{du}\arccos(u)=-\frac{1}{\sqrt{1-u^2}}$
gives \eqref{eq:gradtheta}. Differentiating \eqref{eq:potential} w.r.t.\ $\theta$
yields \eqref{eq:dVdtheta}, and $\nabla V_{\tau,D,N}(\vec n)$ follows by the chain rule.
The intrinsic gradient on the embedded sphere is obtained by tangential
projection, giving \eqref{eq:intrgradV}.
\end{proof}

Substituting \eqref{eq:potential} and Lemma~\ref{lem:gradV} into the quantum Riemannian cubic with obstacle avoidance \eqref{eq:explicitODE} yields a fully explicit fourth-order ODE on $S^2$.

Whenever the boundary-value problem is feasible, sufficiently
large penalty parameters enforce avoidance of the undesired quantum state by
all minimisers:

\begin{proposition}\label{prop:avoidance}
Consider the $V_{\tau,D,N}$ defined in \eqref{eq:potential}. Fix $D>0$ and suppose the boundary-value problem is \emph{feasible} in the sense that
there exists at least one admissible curve joining the boundary data while remaining
outside the geodesic ball $\{\vec n\in S^2:\theta(\vec n)<D\}$ around $\vec n_\star$.
Then, for any $r\in(0,D)$ there exist $\tau_\star>0$ and $N_\star\in\mathbb N$ such that,
for all $\tau\ge \tau_\star$ and $N\ge N_\star$, every minimiser of the action \eqref{eq:actionS2} avoids the obstacle with tolerance $r$, i.e.\ $\theta(\vec n(t))\ge r$ for all $t\in[0,T]$.
\end{proposition}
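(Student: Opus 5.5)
Our strategy is a comparison argument. The energy of a minimiser is bounded above by the energy of the feasible competitor. On the other hand, any curve that enters the ball $\{\theta<r\}$ pays a potential cost that grows with $\tau$. The comparison has to be set up carefully, because the competitor also pays some potential cost. That cost tends to zero as $N\to\infty$, but only if the competitor stays a positive distance away from $\{\theta\le D\}$.

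\emph{Step 1 (competitor with a margin).} Let $\vec n_f$ be the feasible curve, so $\theta(\vec n_f(t))\ge D$ for all $t$. We would like a competitor $\vec n_c$ with $\theta(\vec n_c(t))\ge D'$ for some $D'>D$. In that case
\[
V_{\tau,D,N}(\vec n_c)\le \tau/(1+(D'/D)^{2N}),
\]
so
\[
J[\vec n_c]\le A_c+T\tau(D/D')^{2N}, \qquad A_c:=\int_0^T\tfrac12\|D_t\dot{\vec n}_c\|^2dt.
\]
If only $\theta\ge D$ is available, the boundary case contributes $\tau/2$ per unit time, which is fatal. We therefore have two options. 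The first is to read ``feasible'' as allowing a small margin, which holds whenever the boundary data lie strictly outside the closed ball. The second is to deform $\vec n_f$ slightly, keeping its endpoints and end velocities fixed, for instance by composing with rotations that push away from $\vec n_\star$ in the interior. Alternatively, we can simply replace $D$ by an intermediate radius. We expect this to be the main technical obstacle. A clean way around it is to use feasibility with respect to some $D'$ slightly larger than $D$, or, more simply, to observe that the argument only needs the competitor's total potential to be $o(\tau)$.

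\emph{Step 2 (lower bound for violators).} Suppose a minimiser $\vec n$ has $\theta(\vec n(t_0))<r$ at some time $t_0$. Because the endpoints lie outside the ball, the curve must leave $\{\theta<r\}$, and we need a quantitative lower bound on how long it stays near the obstacle. We get this from an upper bound on the speed. First, $J[\vec n]\le J[\vec n_c]$ bounds the acceleration energy by a constant $C_0$ that depends on $\tau$, since $J[\vec n_c]\le A_c+T\tau(D/D')^{2N}$. Take $N$ large enough that $T\tau(D/D')^{2N}\le 1$ for the relevant $\tau$. Since $(D/D')^{2N}$ decays geometrically, it suffices to choose $N\ge N_\star(\tau)$; to make the choice uniform, fix $\tau_\star$ first and then handle larger $\tau$ by the same inequality with $N$ depending only logarithmically on $\tau$. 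With this, $\int\|D_t\dot{\vec n}\|^2\le 2(A_c+1)$. Combining this with the fixed $\dot{\vec n}(0)$ and Cauchy--Schwarz gives $\|\dot{\vec n}(t)\|\le \|\vec v_0\|+\sqrt{2T(A_c+1)}=:L$, where we use that $\frac{d}{dt}\|\dot{\vec n}\|\le\|D_t\dot{\vec n}\|$. Hence the curve stays in $\{\theta<r'\}$, with $r':=(r+D)/2$, for a time of at least $\delta:=(r'-r)/L$, either before or after $t_0$. On $\{\theta<r'\}$ we have
\[
V_{\tau,D,N}\ge \tau/(1+(r'/D)^{2N})\ge \tau/2 .
\]
Therefore $J[\vec n]\ge \tau\delta/2$.

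\emph{Step 3 (conclusion).} Comparing the two bounds gives $\tau\delta/2\le J[\vec n]\le A_c+1$. Since $\delta$ does not depend on $\tau$ or $N$, this is impossible once $\tau>\tau_\star:=2(A_c+1)/\delta$. So every minimiser satisfies $\theta(\vec n(t))\ge r$ for all $t\in[0,T]$. The only delicate point is the margin issue in Step 1, where we would attempt the deformation construction first.
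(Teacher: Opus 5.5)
You follow the same comparison argument as the paper: a minimiser is played against a feasible reference curve in the safe region. Your Step~2 adds something the paper's sketch never provides, namely a lower bound $\delta=(r'-r)/L$ on the time an intruding minimiser spends in $\{\theta<r'\}$, obtained from the speed bound $\|\dot{\vec n}\|\le\|\vec v_0\|+\sqrt{2TJ[\vec n]}$. You are also right that the paper's claim that $V_{\tau,D,N}$ becomes arbitrarily small on $\{\theta\ge D\}$ fails on $\{\theta=D\}$, where $V_{\tau,D,N}=\tau/2$.

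\textbf{Main gap: the order of quantifiers.} Steps 2--3 run under the standing assumption $T\tau(D/D')^{2N}\le 1$. So what you actually prove is this: there is $\tau_\star$ such that for each $\tau\ge\tau_\star$ there is an $N_\star(\tau)$, of order $\log(T\tau)/\log(D'/D)$, with avoidance for all $N\ge N_\star(\tau)$. The proposition asks for a single $N_\star$ that serves every $\tau\ge\tau_\star$, and ``$N$ depending only logarithmically on $\tau$'' is not that.

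This cannot be patched within a fixed-competitor comparison. For fixed $N$, every curve pays potential at least $\tau T/(1+(\pi/D)^{2N})$. In general a fixed competitor pays a positive multiple of $\tau$ even above this common baseline. Feeding a bound of order $\tau$ into your speed estimate makes $L$ of order $\sqrt{\tau}$. Your intrusion lower bound $\tau\delta/2$ then grows only like $\sqrt{\tau}$, and it never beats an upper bound that is linear in $\tau$. So this method yields only the version with $N$ tied to $\tau$, which is also all the paper's sketch really supports. The uniform claim, if true, needs an argument that does not compare against a fixed curve.

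\textbf{Second gap: Step~1 is left open, and you need it.} The hypothesis allows the boundary data to lie on $\{\theta=D\}$. In that case no admissible curve has a margin $D'>D$, so the geometric bound $T\tau(D/D')^{2N}$ is unavailable. A fix is as follows.
\begin{itemize}
\item Let $\Psi_s$ be the flow of $g(\theta)\,\partial_\theta$, the radial field about $\vec n_\star$. Take $g\ge 0$ positive on $[D,\pi-\eta]$ and vanishing near both poles.
\item Set $\vec n_c(t)=\Psi_{\epsilon\beta(t)}(\vec n_f(t))$, with $\beta>0$ on $(0,T)$ and $\beta=\beta'=0$ at $t=0,T$.
\item This preserves $(\vec n,\dot{\vec n})$ at both endpoints and gives $\theta(\vec n_c)>D$ on $(0,T)$.
\item Dominated convergence then gives $\tau^{-1}\int_0^T V_{\tau,D,N}(\vec n_c)\,dt\to 0$ as $N\to\infty$, though with no explicit rate.
\item When the data lie strictly outside the closed ball, compactness upgrades this to a genuine margin $D'>D$.
\end{itemize}
A fixed rotation will not do this job. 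Any rotation not about the $\vec n_\star$-axis moves some points towards $\vec n_\star$. The push has to be radial about $\vec n_\star$ and cut off near $-\vec n_\star$, where the radial direction is undefined. With this construction your argument becomes a complete proof of the tied statement, with $N_\star$ depending on $\tau$. It does not prove the uniform statement as written.
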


\begin{proof}
The argument follows the standard obstacle-avoidance estimate for modified Riemannian
cubics \cite{goodman2021variational}. One compares the action of a minimiser against a feasible reference trajectory
that stays in the ``safety region'' (outside the $D$-ball), and uses lower/upper bounds
of $V_{\tau,D,N}$ on a ``risk region'' versus the safety region. For avoidance families,
increasing $\tau$ and $N$ makes $V_{\tau,D,N}$ arbitrarily large on $\{\theta<D\}$ while
making it arbitrarily small on $\{\theta\ge D\}$. This yields a contradiction if a
minimiser were to enter $\{\theta<r\}$: it would incur a potential cost that exceeds the
reference action bound. Hence minimisers remain outside $\{\theta<r\}$.
\end{proof}

\begin{remark}
If one requires $V$ to vanish \emph{identically} outside the sensing radius $D$, one may
replace \eqref{eq:potential} by a smooth bump-type potential supported on $\{\theta<D\}$,
e.g.\ $V(\vec n)=e^\tau\exp\!\big(-\frac{1}{1-(\theta(\vec n)/D)^{2N}}\big)$ for $\theta<D$ and
$V(\vec n)=0$ otherwise \cite{goodman2021variational}. 
\end{remark}

\subsection{Modeling general quantum obstacle}\label{sec:general_obstacle}

We consider as obstacle a (closed) forbidden region $P\subset S^2$, representing
a set of undesired pure states. Using the round metric on $S^2$, we denote the
geodesic distance by $d_{S^2}(\cdot,\cdot)$ and define the distance-to-set $d(\vec n,P) := \inf_{\vec m\in P} d_{S^2}(\vec n,\vec m)
  = \inf_{\vec m\in P} \arccos(\vec n\cdot \vec m)$. Following \cite{goodman2021variational} for a tolerance $D>0$ we
seek trajectories that remain in $S^2\setminus B_D(P)$, where
$B_D(P)=\{\vec n\in S^2: d(\vec n,P)<D\}$. 

Let $f_{\tau,D,N}:[0,\pi]\to\R$ be the smooth radial profile
\begin{equation*}
  f_{\tau,D,N}(s)=\frac{\tau}{1+(s/D)^{2N}},\,\,\, \tau>0,\ D>0,\ N\in\mathbb N.
\end{equation*}
Define the potential $V_{\tau,D,N}(\vec n) := f_{\tau,D,N}\big(d(\vec n,P)\big)$. For fixed $\tau$ and $D$, as $N\to\infty$ the profile $f_{\tau,D,N}$ approaches
$\tau$ on $[0,D)$ and to $0$ on $(D,\pi]$, so that the family acts as an
``avoidance family'' for the region $P$ at scale $D$.


To obtain an explicit, smooth potential depending only on point-to-point
distances, we assume that $P$ is totally bounded and choose a finite cover by
geodesic balls $P \subset \bigcup_{i=1}^M B_\varepsilon(\vec p_i)$, $\vec p_i\in S^2$, for some $\varepsilon>0$. We then treat $\{\vec p_i\}_{i=1}^M$ as point-obstacles
and define the potential
\begin{equation}\label{eq:sum_potential}
  V_{\tau,D,N}^{\mathrm{cov}}(\vec n)
  :=\sum_{i=1}^M \frac{\tau}{1+\left(\frac{\theta_i(\vec n)}{D}\right)^{2N}},
\end{equation} with $\theta_i(\vec n):=d_{S^2}(\vec n,\vec p_i)$. 
Each term is smooth away from $\vec n=\pm \vec p_i$, and its Euclidean gradient
is given explicitly by Lemma~\ref{lem:gradV} with $\vec n_\star$ replaced by
$\vec p_i$. The intrinsic gradient on $S^2$ is obtained by tangential projection,
$\grad_{S^2}V_{\tau,D,N}^{\mathrm{cov}}(\vec n)=P(\vec n)\nabla V_{\tau,D,N}^{\mathrm{cov}}(\vec n)$.


\begin{proposition}
\label{prop:avoidance_general}
Let $P\subset S^2$ be a closed forbidden region and fix a tolerance radius $D>0$.
Assume that the boundary-value problem for \eqref{eq:RCfunctionalObstacle} admits at least
one admissible curve satisfying the prescribed boundary data and remaining in
the safe set $S^2\setminus B_D(P)$, where $B_D(P)=\{\vec n\in S^2:\ d(\vec n,P)<D\}$. Then, for any $r\in(0,D)$ there exist $\tau_\star>0$ and $N_\star\in\mathbb N$ such
that, for all $\tau\ge \tau_\star$ and $N\ge N_\star$, every minimiser of \eqref{eq:RCfunctionalObstacle} with obstacle potential $V_{\tau,D,N}$ (or, in the practical case, with
the cover-based potential $V^{\mathrm{cov}}_{\tau,D,N}$) satisfies $d(\vec n(t),P)\ge r$ for all $t\in[0,T]$.\end{proposition}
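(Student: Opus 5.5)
The plan is to compare the action of a minimiser with that of a fixed feasible reference curve, as in the proof of Proposition~\ref{prop:avoidance}, replacing the point distance $\theta$ by the set distance $d(\cdot,P)$.

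\textbf{Reference bound.} Let $\bar{\vec n}$ be the admissible curve from the hypothesis, so $d(\bar{\vec n}(t),P)\ge D$ on $[0,T]$. Set $K:=\int_0^T\frac12\|D_t\dot{\bar{\vec n}}\|^2dt<\infty$. Since $f_{\tau,D,N}$ is decreasing, $V_{\tau,D,N}(\bar{\vec n}(t))\le f_{\tau,D,N}(D)=\tau/2$. This bound is too weak on its own, so I also use the growth condition, discussed below.

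\textbf{Penalty incurred by an intruding curve.} Suppose a minimiser $\vec n$ satisfies $d(\vec n(t_0),P)<r$ for some $t_0$. Since $J(\vec n)\le J(\bar{\vec n})$ and $V\ge0$, we get $\int\|D_t\dot{\vec n}\|^2\le 2J(\bar{\vec n})$. Using the fixed boundary velocity $\vec v_0$ and the bound $\|\frac{d}{dt}\|\dot{\vec n}\|\|\le\|D_t\dot{\vec n}\|$ together with Cauchy--Schwarz, the speed $\|\dot{\vec n}\|$ is bounded by some $C$ depending only on $J(\bar{\vec n})$ and the data. Hence $\vec n$ stays within $\{d<r'\}$, with $r'=(r+D)/2$, on an interval of length at least $\delta=(r'-r)/C$ around $t_0$, because $d(\cdot,P)$ is $1$-Lipschitz. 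On this interval $V\ge f_{\tau,D,N}(r')=\tau/(1+(r'/D)^{2N})$, so $J(\vec n)\ge \delta\,\tau/(1+(r'/D)^{2N})$. Meanwhile $J(\bar{\vec n})\le K+T\,\tau\,(D/d_{\min})^{2N}$ only if the reference stays at distance $d_{\min}>D$.

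\textbf{Main obstacle.} The hard step is that both the bound on $C$ (hence on $\delta$) and the reference bound grow with $\tau$. I would handle this in two ways. First, for the reference, I would shrink to a reference at distance $\ge D'$ for some $D'>D$ if possible. If not, I would use $V\le\tau/2$ on the reference together with the fact that on $\{d<r'\}$ the penalty tends to $\tau$ as $N\to\infty$. Comparing $\delta\tau(1-o(1))$ with $K+T\tau/2$ then needs $\delta>T/2$, which fails in general. Second, and this is the route I would actually take, I would follow the approach of \cite{goodman2021variational}: use the avoidance-family property with the feasible curve placed in the safety region at distance $D$, and split $[0,T]$ into the risk region $\{d<D\}$ and the safety region. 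Below $r$ the potential ratio $f(r')/f(s)$ for $s\ge D$ blows up like $((s/r')^{2N})$. With $\tau$ chosen first so that $\delta(\tau)\tau$ dominates $K$ (the speed bound scales like $\sqrt{\tau}$, so $\delta\tau\sim\sqrt{\tau}\to\infty$), and then $N$ large so that the potential cost of the reference, computed with a reference slightly outside $D$, is negligible, we obtain the contradiction $J(\vec n)>J(\bar{\vec n})$.

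\textbf{Cover-based potential.} For $V^{\mathrm{cov}}$, note that if $d(\vec n,P)<r$ then $\theta_i(\vec n)<r+\varepsilon$ for some $i$, so $V^{\mathrm{cov}}\ge f(r+\varepsilon)$. Also, on the safe set $\theta_i\ge D-\varepsilon$ for all $i$, so $V^{\mathrm{cov}}\le M f(D-\varepsilon)$. Choosing $\varepsilon<(D-r)/2$, the same argument runs with $r+\varepsilon$ and $D-\varepsilon$ in place of $r$ and $D$, with the factor $M$ absorbed by taking $N$ larger.
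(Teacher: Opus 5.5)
Your argument follows the paper's comparison argument: a minimiser is compared against the feasible reference curve, using $V_{\tau,D,N}\approx\tau$ on the risk region. You also correctly identify the two points the paper's proof passes over.
\begin{itemize}
\item On $\{d(\cdot,P)=D\}$ one has $V_{\tau,D,N}=\tau/2$. So the reference action is bounded only by $K+T\tau/2$, not ``independently of $\tau$ and $N$'' as the paper asserts.
\item The length $|I|$ in the paper's bound $J[\vec n]\ge\frac{\tau}{2}|I|$ depends on the minimiser, hence on $\tau$.
\end{itemize}
Your speed bound $\|\dot{\vec n}\|\le\|\vec v_0\|+\sqrt{2T\,J(\bar{\vec n})}$ is the right tool for the second point. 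The repair you settle on, however, does not prove the proposition as stated. It requires a reference with $d(\bar{\vec n}(t),P)\ge D'>D$ for all $t$, which the hypothesis does not supply. For instance, no such curve exists when a prescribed endpoint lies on $\{d(\cdot,P)=D\}$. Without this margin, $f_{\tau,D,N}(r')/f_{\tau,D,N}(D)=2/(1+(r'/D)^{2N})\to 2$ as $N\to\infty$. As you note yourself, the comparison would then need $\delta>T/2$.

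Even granting the margin, two steps fail.

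\emph{Order of quantifiers.} You fix $\tau$ and then take $N$ with $T\tau(D/D')^{2N}\le K$, i.e.\ $N\ge\log(T\tau/K)/(2\log(D'/D))$. So your $N_\star$ depends on $\tau$, whereas the statement requires a single $N_\star$ for all $\tau\ge\tau_\star$. For fixed $N$ and $\tau\to\infty$ your estimates do not close. $J(\bar{\vec n})$ grows like $T\tau(D/D')^{2N}$, your speed bound grows like $\sqrt\tau$, and the intrusion cost $\delta\tau/2$ grows only like $\sqrt\tau$. The relevant comparison is $\delta\tau$ against $J(\bar{\vec n})$, not against $K$. Uniformity in $\tau$ would need an additional idea, such as a $\tau$-dependent comparison curve or a sharper lower bound on the cost of entering $B_r(P)$.

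\emph{Cover-based potential.} Your safe-set bound $V^{\mathrm{cov}}_{\tau,D,N}\le M f_{\tau,D,N}(D-\varepsilon)$ tends to $M\tau$, not to $0$, as $N\to\infty$, because $D-\varepsilon<D$ lies on the plateau of the profile. So the factor $M$ cannot be absorbed by enlarging $N$. You need every centre to be farther than $D$ from the reference. One way is to take $\vec p_i\in P$: then $\theta_i\ge d(\cdot,P)\ge D'$ along $\bar{\vec n}$, giving $V^{\mathrm{cov}}_{\tau,D,N}\le M\tau(D/D')^{2N}$. Alternatively, take $\varepsilon<D'-D$.

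With these corrections, your argument proves a weaker statement. If a reference with margin $D'>D$ exists, then there is $\tau_\star$ such that, for each $\tau\ge\tau_\star$, every minimiser avoids $B_r(P)$ for all $N\ge N_\star(\tau)$.
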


\begin{proof}
The argument follows the standard sufficient-conditions framework for obstacle
avoidance in modified Riemannian cubics \cite{goodman2021variational}. Let $\gamma_{\mathrm{ref}}:[0,T]\to S^2$ be a feasible reference curve satisfying
the boundary data and remaining in the safety region
$S^2\setminus B_D(P)$. Since $V_{\tau,D,N}$ is an avoidance family, there exists
a constant $C>0$ such that $V_{\tau,D,N}(\vec n)\le C$
for all $\vec n\in S^2\setminus B_D(P)$, uniformly in $N$, while $V_{\tau,D,N}(\vec n)\to\tau$ as $N\to\infty$ for all $\vec n\in B_D(P)$.

Let $\vec n:[0,T]\to S^2$ be a minimiser of the action. Suppose, by contradiction,
that $\vec n(t_0)\in B_r(P)$ for some $t_0\in[0,T]$. By continuity, there exists
an interval $I\subset[0,T]$ of positive measure on which $\vec n(t)\in B_D(P)$.
On this interval, the potential term satisfies $\displaystyle{V_{\tau,D,N}(\vec n(t))\ge \frac{\tau}{2}}$ for $N$ sufficiently large. Consequently, the action of $\vec n$ admits the lower bound $\displaystyle{J[\vec n]\ge \frac{\tau}{2}\,|I|}$. On the other hand, the action of the reference curve $\gamma_{\mathrm{ref}}$
remains bounded independently of $\tau$ and $N$, since it stays entirely in the
safety region. Choosing $\tau$ sufficiently large yields $J[\vec n] > J[\gamma_{\mathrm{ref}}]$, which contradicts the minimality of $\vec n$. Therefore, for $\tau$ and $N$ large enough, no minimiser can enter $B_r(P)$.

For the potential $V^{\mathrm{cov}}_{\tau,D,N}$, the same argument
applies since $P$ is contained in a finite union of geodesic balls and $V^{\mathrm{cov}}_{\tau,D,N}$ preserves the avoidance-family properties.
\end{proof}

\section{Lie Group Formulation of Second-Order Variational Quantum Dynamics}
From a geometric viewpoint, pure quantum states are naturally described as
elements of a reduced configuration space obtained by quotienting out the
physically irrelevant global phase. This construction endows the space of pure
states with a canonical Riemannian structure, induced by quantum transition
probabilities, which provides an intrinsic notion of distance, velocity, and
covariant acceleration. This framework allows higher-order
variational problems for quantum state trajectories to be formulated directly
on the physical state space, without reference to a particular phase
representation.

\subsection{Riemannian homogeneous spaces}

Let $G$ be a Lie group and its Lie algebra $\mathfrak{g}$ be defined as the tangent space to $G$ at the identity, $\mathfrak{g}:=T_{e}G$. Denote by $L_g$ the left-translation map $L_g:G\to G$ given by $L_g(h)=gh,$ for all $g,h\in G,$ where $gh$ denotes the Lie group multiplication between the elements $g$ and $h$.

\begin{definition}\label{def: left-action}
Let $G$ be a Lie group and $H$ be a smooth manifold. A \textit{left-action} of $G$ on $H$ is a smooth map $\Phi: G \times H \to H$ such that for all $g,h \in G,$ $q \in H$ (i) $\Phi(e, q) = q$, (ii) $\Phi(g, \Phi(h, q)) = \Phi(gh, q)$, and (iii) the map $\Phi_g: H \to H$ defined by $\Phi_g(q) = \Phi(g, q)$ is a diffeomorphism.
\end{definition}

We now define an important class of group actions:

\begin{definition}\label{def: types of group actions}
Let $G$ be a Lie group, $H$ be a smooth manifold, and $\Phi: G \times H \to H$ be a left-action. We say that $\Phi$ is \textit{transitive} if for every $p,q \in H$, there exists some $g \in G$ such that $gp = q$. \end{definition}

One of the most important cases of a Lie group action is a Lie group acting on itself. The left translation $L_{g}$ is a left action of $G$ on itself and a diffeomorphism on $G$.  Its tangent map  (i.e, the linearization or tangent lift) is denoted by $T_{h}L_{g}:T_{h}G\to T_{gh}G$. Consider an inner-product on the Lie algebra $\mathfrak{g}$ denoted by $\langle \cdot,\cdot\rangle_\mathfrak{g}$. Using the left-translation we define a Riemannian metric on $G$ by the relation $\langle X,Y\rangle:=\langle (T_{g} L_{g^{-1}}) (X),(T_{g} L_{g^{-1}}) (Y)\rangle_\mathfrak{g}$ for all $g\in G, \; X,Y\in T_g G$, which is called left-invariant metric because $\langle (T_{g}L_{h})(X),(T_{g}L_{h})(X)\rangle=\langle X,Y\rangle$ for all $g, h\in G, \; X,Y\in T_g G$.

\begin{definition}Let $G$ be a connected Lie group equipped with a left-invariant Riemannian metric. A \textit{homogeneous space} $H$ of $G$ is a smooth manifold on which $G$ acts transitively. Any Lie group is itself a homogeneous space, where the transitive action is given by left-translation. \end{definition}

Suppose that $\Phi: G \times H \to H$ is a transitive left-action, which we denote by $gx := \Phi_g(x)$. It can be shown that for any $x \in H$, we have $G/\text{Stab}(x) \cong H$ as differentiable manifolds, where $\text{Stab}(x) := \{g \in G \ \vert \ gx = x\}$ denotes the \textit{stabilizer subgroup} (also called the \textit{isotropy subgroup}) of $x$, and $G/\text{Stab}(x)$ denotes the space of equivalence classes determined by the equivalence relation $g \sim h$ if and only if $g^{-1}h \in \text{Stab}(x)$. In addition, for any closed Lie subgroup $K \subset G$, the left-action $\Phi: G \times G/K \to G/K$ satisfying $\Phi_g([h]) = [gh]$ for all $g, h \in G$ is transitive, and so $G/K$ is a homogeneous space. Hence, we may assume without loss of generality that $H := G/K$ is a homogeneous space of $G$ for some closed Lie subgroup $K$. Let $\pi: G \to H$ be the canonical projection map. With this notation, the left multiplication commutes with the left action in the sense that $\Phi_{g}\circ \pi=\pi\circ L_{g}$ for any $g\in G$.

 Given a left-invariant
metric on $G$, any curve $g(t)\in G$ can be described via its left-trivialised
velocity $\xi(t) = g(t)^{-1} \dot g(t)\in\frakg$. The reconstruction equation is $\dot g(t) = g(t)\,\xi(t)$. If $M=G/K$ is a homogeneous space, a curve $\gamma(t)\in M$ can be lifted
to a curve $g(t)\in G$ such that $\gamma(t)=\pi(g(t))$.

The pure-state space $\CP^n$ is a compact homogeneous manifold obtained as a
quotient of the special unitary group by the stabiliser of a reference ray.
More precisely, $\CP^n \;\simeq\; G/K$ with $G := SU(n+1)$ and $K := S\big(U(1)\times U(n)\big)$. The group $G$ acts transitively on $\CP^n$ via the natural projective action
$[U]\cdot[\psi]:= [U\psi]$, where $[\psi]\in\CP^n$ denotes the equivalence class of $\psi\in\mathbb C^{n+1}$
under multiplication by a nonzero complex scalar. 

In the qubit case $n=1$, the isotropy subgroup reduces to
$K\simeq U(1)$, and one recovers the standard identifications $\CP^1 \;\simeq\; SU(2)/U(1) \;\simeq\; S^2$. The corresponding projection map $\pi:SU(2)\to S^2$ assigns to a unitary matrix
$U$ the Bloch vector associated with the pure state $U\ket{\psi_0}$, where
$\ket{\psi_0}$ is a fixed reference state.

\subsection{Second-order variational problems on Lie groups}\label{sec:LG-symmbreak}

We now recall the Lie-group framework for second-order variational problems and
the treatment of obstacle-type potentials via \emph{reduction with symmetry breaking}.
The guiding principle is that the kinetic term (built from a left-invariant metric)
is typically $G$-invariant, whereas the obstacle potential is not, and must therefore be
handled through an \emph{extended} potential that restores invariance in an augmented space;
see \cite{goodmancolombo_reduction_broken,goodmancolombo_EP_broken,holm2025geometric}.

Let $G$ be a Lie group endowed with a left-invariant Riemannian metric and Levi--Civita
connection $\nabla$. Consider a smooth curve $g(t)\in G$ and define the left-trivialised
velocity $\xi(t)=g(t)^{-1}\dot g(t)\in\mathfrak g$. For second-order variational problems, it is convenient to work with the left-trivialised
covariant acceleration. In the Riemannian cubic-with-obstacle setting we consider a second-order Lagrangian
of the form
\begin{equation}\label{eq:Lgroup_clean}
  L(g,\dot g,\ddot g) \;=\; \frac12 \|\eta\|^2 \;+\; V_{\mathrm{ext}}(g,\alpha),
\end{equation}
where $V_{\mathrm{ext}}$ is an \emph{extended potential} depending on an auxiliary parameter
$\alpha$ ranging over a smooth manifold $M$.

A potential $V:G\to\R$ is rarely left-invariant in applications (obstacles are attached to
specific configurations), so the symmetry of the kinetic term is \emph{broken}. Following
\cite{goodmancolombo_EP_broken}, we say that $V$ admits a \emph{partial symmetry} over $M$
if there exist a left action $\Psi:G\times M\to M$ and a smooth function
$V_{\mathrm{ext}}:G\times M\to\R$ such that:
(i) for some $\alpha_0\in M$, $V_{\mathrm{ext}}(g,\alpha_0)=V(g)$ for all $g\in G$; and
(ii) $V_{\mathrm{ext}}$ is invariant under the diagonal action
$\chi_h(g,\alpha)=(hg,\Psi_h(\alpha))$, i.e.
$V_{\mathrm{ext}}(hg,\Psi_h(\alpha)) \;=\; V_{\mathrm{ext}}(g,\alpha)$, $\forall\,g,h\in G,\ \alpha\in M$.
This construction is the standard mechanism enabling reduction with symmetry breaking \cite{holm2025geometric},
and it is particularly natural for obstacle avoidance, where $M$ encodes the obstacle
location (or a finite cover describing an obstacle region). \cite{goodmancolombo_reduction_broken,goodmancolombo_EP_broken}

When the configuration space is a homogeneous space $G/K$ (e.g. $\CP^1\simeq SU(2)/U(1)\simeq S^2$),
a curve $\gamma(t)\in G/K$ can be lifted to $g(t)\in G$ with $\gamma=\pi(g)$.
A potential $V:G/K\to\R$ induces a lifted potential $\widetilde V(g)=V(\pi(g))$ on $G$.
If $V$ is $G$-invariant it reduces directly; otherwise, one introduces an extended potential
$V_{\mathrm{ext}}$ invariant under $\Psi$ so that the variational principle
 can still be reduced in the presence of
broken symmetries. \cite{goodmancolombo_reduction_broken,goodmancolombo_EP_broken}

By construction, the potential $V_{\mathrm{ext}}:G\to\mathbb R$
is invariant under the \emph{right} action of the isotropy subgroup $K$, $V_{\mathrm{ext}}(gk) = V_{\mathrm{ext}}(g)$, $\forall\,k\in K$, which ensures that $V_{\mathrm{ext}}$ descends to a well-defined function
on the homogeneous space $G/K$. In contrast, $V_{\mathrm{ext}}$ is generally \emph{not} invariant under the
\emph{left} action of $G$, corresponding to a symmetry breaking of the
homogeneous-space dynamics.

\begin{remark}
For compact groups with bi-invariant metrics (notably $SU(2)$), the expressions for the
connection and curvature simplify, and the reduced second-order equations take a particularly form in terms of $\xi(t)$ (and the parameter dynamics associated with $\alpha$) \cite{goodmancolombo_reduction_broken,goodmancolombo_EP_broken}.
\end{remark}

\subsection{Axial symmetry and conserved quantities}


We now consider an important special case of physical relevance, namely obstacle
potentials depending only on the expectation value of the Pauli operator
$\sigma_z$.
Equivalently, we assume that the potential has the form $V(\vec n) = \widetilde V(n_z)$,\, $n_z := \vec n \cdot e_z$, that is, $V$ depends only on the $z$--component of the Bloch vector, which
coincides with the quantum expectation value
$\langle \sigma_z \rangle$ for pure states.
Such potentials arise naturally when the forbidden or undesirable region is
axially symmetric about the $z$--axis, as in population constraints or
decoherence--sensitive states.
In quantum control, this corresponds to penalising the expectation value of
$\sigma_z$, or equivalently the population of a preferred computational basis
state, a modelling choice commonly adopted in optimal and robust control
formulations \cite{bonnard2012quantum,dong2010quantum}.

\begin{remark}
In the Bloch--sphere representation, the expectation value
$\langle\sigma_z\rangle$ coincides with the $z$--component $n_z$ of the Bloch
vector and admits a direct interpretation in terms of measurement statistics.
For a normalised pure state $\ket{\psi}$,
$\langle\sigma_z\rangle=|\langle 0|\psi\rangle|^2 - |\langle 1|\psi\rangle|^2
  =
  2|\langle 0|\psi\rangle|^2 - 1$, where $|\langle 0|\psi\rangle|^2$ and $|\langle 1|\psi\rangle|^2$ are the
probabilities of obtaining the outcomes $+1$ and $-1$, respectively, in a
projective measurement of $\sigma_z$.
Thus, invariance under rotations generated by $\sigma_z$ corresponds to
preserving population differences between the computational basis states.
Potentials depending only on $n_z$ therefore model smooth penalties on the
population of a preferred (or forbidden) state.
\end{remark}

Rotations about the $z$--axis define a one--parameter subgroup
$SO(2)\subset SO(3)$ acting on $S^2$ by $\Phi_\theta(\vec n) := R_\theta \vec n,\,
   R_\theta\in SO(2)$. Since $\Phi_\theta$ is an isometry of the round metric, it induces
a natural action on the second--order tangent bundle $T^2S^2$, $(\vec n,\dot{\vec n},\ddot{\vec n}) \mapsto
\big(\Phi_\theta(\vec n),\, T_{\vec n}\Phi_\theta(\dot{\vec n}),\, T_{\vec n}\Phi_\theta(\ddot{\vec n})\big)$, which we refer to as the \emph{second--order tangent lift} of $\Phi_\theta$.

\begin{lemma}
\label{lem:axial_invariance}
If $V(\vec n)=\widetilde V(n_z)$, then $V$ is invariant under rotations about the
$z$--axis, that is, $V(\Phi_\theta(\vec n)) = V(\vec n),\, \forall\,\theta\in\R$. \end{lemma}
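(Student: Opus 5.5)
The plan is a direct computation: I will show that rotations about the $z$--axis leave the $z$--component of every vector unchanged, so any function of $n_z$ alone is automatically invariant.

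First, I fix the explicit form of $\Phi_\theta$. The rotation $R_\theta\in SO(2)\subset SO(3)$ about the $z$--axis acts as
\[
R_\theta=\begin{pmatrix}\cos\theta & -\sin\theta & 0\\ \sin\theta & \cos\theta & 0\\ 0&0&1\end{pmatrix}.
\]
Its third row is $e_z^\top$, so $R_\theta^\top e_z=e_z$; equivalently $R_\theta e_z=e_z$, since $R_\theta$ is orthogonal and fixes the rotation axis.

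Second, I evaluate the $z$--component of a rotated vector. For any $\vec n\in S^2$,
\[
(\Phi_\theta(\vec n))\cdot e_z=(R_\theta\vec n)\cdot e_z=\vec n\cdot R_\theta^\top e_z=\vec n\cdot e_z=n_z .
\]
Because $R_\theta$ is orthogonal, $\Phi_\theta(\vec n)$ also lies in $S^2$, so $V$ can be evaluated there. It follows that $V(\Phi_\theta(\vec n))=\widetilde V\big((R_\theta\vec n)\cdot e_z\big)=\widetilde V(n_z)=V(\vec n)$ for every $\theta\in\R$.

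There is no genuine obstacle in this argument. The only point needing care is using the transpose identity $R_\theta^\top e_z=e_z$ rather than merely asserting that $R_\theta$ fixes the axis. It is also worth noting that the same computation shows the extension $\vec x\mapsto\widetilde V(\vec x\cdot e_z)$ to a neighbourhood of $S^2$ is invariant as well, which will be convenient later. That later use concerns the second-order tangent lift and a Noether-type conserved quantity, and is not required for this lemma.
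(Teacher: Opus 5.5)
Your proof is correct and follows essentially the same route as the paper: both arguments show that $(\Phi_\theta(\vec n))\cdot e_z = n_z$ and conclude that any function of $n_z$ alone is invariant. Your version simply makes the rotation matrix and the identity $R_\theta^\top e_z = e_z$ explicit, where the paper states the preservation of the $z$-component directly.
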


\begin{proof}
The action $\Phi_{\theta}$ preserves the $z$--component, i.e.\
$(\Phi_\theta(\vec n))\cdot e_z = \vec n\cdot e_z = n_z$. Since $V$ depends only
on $n_z$, the invariance follows immediately.
\end{proof}

Consequently, the second--order Lagrangian
\begin{equation}\label{eq:L_second_order}
   L(\vec n,\dot{\vec n},\ddot{\vec n})
   = \frac12\|D_t\dot{\vec n}\|^2 + V(\vec n)
\end{equation}
is invariant under the second--order tangent lift of $\Phi_\theta$:
the kinetic term is invariant because $\Phi_\theta$ is an isometry and the
potential term is invariant by Lemma~\ref{lem:axial_invariance}.

Let $\xi\in\mathfrak{so}(2)\cong\R$ denote the Lie algebra element generating the
one--parameter group $\Phi_\theta$. The associated infinitesimal generator is the
vector field $\displaystyle{X(\vec n) := \left.\frac{d}{d\theta}\right|_{\theta=0}\Phi_\theta(\vec n)
   = e_z\times \vec n}$, which is a \emph{Killing vector field} for the round metric (i.e.\ its flow acts by
isometries). 

\begin{definition}
\label{def:second_order_momentum}
Along a smooth curve $\vec n(t)\in S^2$, define the second--order momentum map for axial symmetry as
\begin{equation*}
   J_z(t)
   := \Big\langle \tfrac{d}{dt}\big(D_t\dot{\vec n}(t)\big),\, X(\vec n(t)) \Big\rangle
      - \Big\langle D_t\dot{\vec n}(t),\, D_t X(\vec n(t)) \Big\rangle,
\end{equation*}
where $\langle\cdot,\cdot\rangle$ denotes the round metric on $S^2$ and
$D_tX(\vec n)=\nabla_{\dot{\vec n}}X$.
\end{definition}

\begin{theorem}
\label{thm:quasiMomentum}
Let $\vec n(t)$ be a solution of
\begin{equation}
   D_t^3\dot{\vec n}
   + R(D_t\dot{\vec n},\dot{\vec n})\dot{\vec n}
   + \nabla V(\vec n) = 0,
\end{equation}
and assume that $V(\vec n)=\widetilde V(n_z)$. Then the second--order momentum map for axial symmetry
$J_z(t)$ is conserved.
\end{theorem}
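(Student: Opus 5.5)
The plan is a direct Noether-type computation. I will differentiate $J_z(t)$ along a solution and show that the derivative reduces to $-\,\mathrm{d}V(\vec n)[X(\vec n)]$, which vanishes by Lemma~\ref{lem:axial_invariance}. Throughout I write $\vec v=\dot{\vec n}$ and $A:=D_t\dot{\vec n}$.

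First I will settle the meaning of the first term in Definition~\ref{def:second_order_momentum}. Here $\tfrac{d}{dt}(D_t\dot{\vec n})$ is the ambient derivative in $\R^3$. It differs from $D_t^2\dot{\vec n}=P(\vec n)\tfrac{d}{dt}(D_t\dot{\vec n})$ only by a multiple of $\vec n$. Since $X(\vec n)=e_z\times\vec n$ is tangent, that normal part is orthogonal to $X$, so
\[
J_z=\langle D_t^2\dot{\vec n},X\rangle-\langle A,D_tX\rangle .
\]
Similarly, the term $\nabla V$ in the equation may be replaced by $\operatorname{grad}V=P(\vec n)\nabla V$ whenever it is paired with a tangent vector.

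Next, using metric compatibility of $D_t$, I will compute
\[
\dot J_z=\langle D_t^3\dot{\vec n},X\rangle+\langle D_t^2\dot{\vec n},D_tX\rangle-\langle D_t^2\dot{\vec n},D_tX\rangle-\langle A,D_t^2X\rangle
=\langle D_t^3\dot{\vec n},X\rangle-\langle A,D_t^2X\rangle .
\]
The key step is to evaluate $D_t^2X$ for the Killing field $X$. For a Killing field one has the identity $\nabla^2_{Y,Z}X=R(Y,X)Z$, up to the sign convention. This gives
\[
D_t^2X=\nabla_{\vec v}\nabla_{\vec v}X=R(\vec v,X)\vec v+\nabla_{A}X .
\]
The term $\langle A,\nabla_AX\rangle$ vanishes because $\nabla X$ is skew-adjoint for a Killing field. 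For the remaining term, pair symmetry and antisymmetry of the curvature tensor give
\[
\langle A,R(\vec v,X)\vec v\rangle=\langle R(\vec v,A)\vec v,X\rangle=-\langle R(A,\vec v)\vec v,X\rangle .
\]
Hence
\[
\dot J_z=\big\langle D_t^3\dot{\vec n}+R(A,\vec v)\vec v,\;X\big\rangle=-\langle\operatorname{grad}V,X\rangle=-\,\mathrm{d}V(\vec n)[X],
\]
where the middle equality uses the equation of motion.

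To finish, I will use Lemma~\ref{lem:axial_invariance}: $V\circ\Phi_\theta=V$, and differentiating at $\theta=0$ gives $\mathrm{d}V[X]=0$. More concretely, taking the natural extension $\widetilde V(n_z)$, one has $\nabla V=\widetilde V'(n_z)e_z$, and $e_z\cdot(e_z\times\vec n)=0$. Therefore $\dot J_z=0$.

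I expect the main obstacle to be the sign conventions in the Killing identity combined with the convention $R(X,Y)Z=\nabla_X\nabla_YZ-\nabla_Y\nabla_XZ-\nabla_{[X,Y]}Z$. A sign slip there would leave a nonzero curvature residue instead of cancelling the curvature term in the equation of motion. To guard against this, I will verify the identity directly on $S^2$ rather than rely on the abstract version. On $S^2$ the covariant derivative of $X$ is $\nabla_YX=P(\vec n)(e_z\times Y)=e_z\times Y-((e_z\times Y)\cdot\vec n)\vec n$, and the curvature is $R(a,b)c=(b\cdot c)a-(a\cdot c)b$. Computing $D_t^2X$ extrinsically with these formulas, I will check that $\langle A,D_t^2X\rangle=-\langle R(A,\vec v)\vec v,X\rangle$, using $\vec n\cdot\vec v=0$, $\vec n\cdot\ddot{\vec n}=-\|\vec v\|^2$, and $A\cdot\vec n=0$.
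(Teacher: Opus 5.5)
Your argument is correct, and it takes a genuinely different route from the paper's. You work intrinsically with the quantity exactly as defined. The normal part of $\frac{d}{dt}(D_t\dot{\vec n})$ is orthogonal to the tangent field $X$, so it can be discarded, and metric compatibility then gives $\dot J_z=\langle D_t^3\dot{\vec n},X\rangle-\langle A,D_t^2X\rangle$. Next come the Killing identity $D_t^2X=R(\vec v,X)\vec v+\nabla_AX$, the skew-adjointness of $\nabla X$, and pair symmetry. Your sign in the Killing identity is the right one for $R(a,b)c=(b\cdot c)a-(a\cdot c)b$; extrinsically one finds $D_t^2X=\nabla_AX-\|\vec v\|^2X+(X\cdot\vec v)\vec v$. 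Together these give $\dot J_z=\langle D_t^3\dot{\vec n}+R(A,\vec v)\vec v,X\rangle$ for \emph{every} curve. The equation of motion and $\mathrm dV[X]=0$ then finish the proof.

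The paper instead runs the classical extrinsic Noether computation in $\R^3$. It writes the charge through $\partial L/\partial\dot{\vec n}$ and $\partial L/\partial\ddot{\vec n}$, and applies the constrained Euler--Lagrange equations with a multiplier $\lambda\vec n$, which drops out because $X\perp\vec n$. It then uses $\langle\nabla V,e_z\times\vec n\rangle=0$. That route needs no curvature identities, but it has three weaknesses. Its individual steps depend on an unstated extension of $L$ off $T^2S^2$. It never identifies its charge with the one in Definition~\ref{def:second_order_momentum}. And as written it has a sign slip: with the minus sign in front of $\langle\partial L/\partial\ddot{\vec n},\dot X\rangle$, the ``middle terms'' do not cancel. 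The standard higher-order charge $\langle\partial L/\partial\dot{\vec n}-\tfrac{d}{dt}\partial L/\partial\ddot{\vec n},X\rangle+\langle\partial L/\partial\ddot{\vec n},\dot X\rangle$ is conserved by the infinitesimal rotational invariance of a rotation-invariant extension of $L$, and it equals $-J_z$ on $S^2$.

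Your derivation avoids this bookkeeping entirely. It also carries over verbatim to any Killing field on any Riemannian manifold, e.g.\ $\mathbb P(\mathcal H)$ with the Fubini--Study metric, whenever $\mathrm dV[X]=0$.
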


\begin{proof}
Since $V(\vec n)=\widetilde V(n_z)$ depends only on the $z$--component, it is
invariant under rotations about the $z$--axis. The corresponding infinitesimal
generator is $X(\vec n)=e_z\times \vec n$. Define the axial momentum
\[
   J_z
   :=
   \left\langle
      \frac{\partial L}{\partial \dot{\vec n}}
      -
      \frac{d}{dt}\frac{\partial L}{\partial \ddot{\vec n}},
      \,
      X(\vec n)
   \right\rangle
   -
   \left\langle
      \frac{\partial L}{\partial \ddot{\vec n}},
      \,
      \dot X(\vec n)
   \right\rangle ,
\]
where $\dot X(\vec n)=e_z\times\dot{\vec n}$.

Differentiating $J_z$ with respect to time,
\begin{align*}
\frac{d}{dt}J_z
&=
\left\langle
   \frac{d}{dt}\frac{\partial L}{\partial \dot{\vec n}}
   -
   \frac{d^2}{dt^2}\frac{\partial L}{\partial \ddot{\vec n}},
   \,
   X(\vec n)
\right\rangle
+
\left\langle
   \frac{\partial L}{\partial \dot{\vec n}}
   -
   \frac{d}{dt}\frac{\partial L}{\partial \ddot{\vec n}},
   \,
   \dot X(\vec n)
\right\rangle
\\
&\quad
-
\left\langle
   \frac{d}{dt}\frac{\partial L}{\partial \ddot{\vec n}},
   \,
   \dot X(\vec n)
\right\rangle
-
\left\langle
   \frac{\partial L}{\partial \ddot{\vec n}},
   \,
   \ddot X(\vec n)
\right\rangle .
\end{align*}

The middle terms cancel identically. Using $\ddot X(\vec n)=e_z\times\ddot{\vec n}$,
we obtain
\[
\frac{d}{dt}J_z
=
\left\langle
   \frac{d}{dt}\frac{\partial L}{\partial \dot{\vec n}}
   -
   \frac{d^2}{dt^2}\frac{\partial L}{\partial \ddot{\vec n}},
   \,
   e_z\times\vec n
\right\rangle
-
\left\langle
   \frac{\partial L}{\partial \ddot{\vec n}},
   \,
   e_z\times\ddot{\vec n}
\right\rangle .
\]

Using the second-order Euler--Lagrange on $S^2$,
\[
\frac{d}{dt}\frac{\partial L}{\partial \dot{\vec n}}
-
\frac{d^2}{dt^2}\frac{\partial L}{\partial \ddot{\vec n}}
=
\frac{\partial L}{\partial \vec n}
+
\lambda\,\vec n ,
\]
where $\lambda$ enforces the constraint $\|\vec n\|=1$. Since $(e_z\times\vec n)\perp \vec n$, the constraint term drops out. Moreover,
\[
\left\langle
   \frac{\partial L}{\partial \vec n},
   \,
   e_z\times\vec n
\right\rangle
=
\left\langle
   \nabla V(\vec n),
   \,
   e_z\times\vec n
\right\rangle
=0
\]
because $V$ depends only on $n_z$. Finally, $\displaystyle{\left\langle
   \frac{\partial L}{\partial \ddot{\vec n}},
   \,
   e_z\times\ddot{\vec n}
\right\rangle
=0}$ since it is the inner product of a vector with its own cross--product. Therefore, $\frac{d}{dt}J_z=0$, and $J_z$ is conserved along solutions of the Euler--Lagrange equations.
\end{proof}



In the axially symmetric case considered here, the potential
$V(\vec n)=\widetilde V(n_z)$ breaks the $SO(3)$ symmetry of the sphere while
preserving an $SO(2)$ invariance associated with rotations about the
$z$--axis. This situation fits precisely into the paradigm of \emph{reduction by
symmetry breaking} for higher--order Lagrangian systems
\cite{goodmancolombo_reduction_broken,goodmancolombo_EP_broken}. 

While full
reduction is no longer possible, the presence of a remaining continuous symmetry
gives rise to a conserved quantity, here realised as a second--order momentum map
associated with the Killing field $X(\vec n)=e_z\times\vec n$. The axially symmetric potential considered here admits a particularly transparent interpretation within the extended–potential framework. Let $M=S^2$ and define $V_{\mathrm{ext}}(\vec n,\alpha) := \widetilde V(\vec n\cdot \alpha),
\,\alpha\in M$, with the natural action of $SO(3)$ on $M$ given by $\Psi_g(\alpha)=g\alpha$.
For the distinguished parameter value $\alpha_0=e_z$, one recovers the physical
potential $V(\vec n)=V_{\mathrm{ext}}(\vec n,\alpha_0)$. By construction, $V_{\mathrm{ext}}$ is invariant under the diagonal action
$(\vec n,\alpha)\mapsto(g\vec n,g\alpha)$, thereby restoring full rotational
invariance at the extended level. Fixing $\alpha=\alpha_0$ breaks the $SO(3)$
symmetry down to the axial subgroup $SO(2)$, which explains the persistence of a
single conserved quantity.


\subsection{Quantum Riemannian cubics with obstacle avoidance on the projective Hilbert space}

Let $\mathcal{H}$ be a complex Hilbert space of finite dimension $n+1$,
endowed with a Hermitian inner product
$\langle \cdot , \cdot \rangle : \mathcal{H}\times\mathcal{H}\to\mathbb{C}$,
linear in the second argument and conjugate–linear in the first. The associated norm is given by
$\|\psi\| := \sqrt{\langle \psi,\psi\rangle}$. The space of pure quantum states is identified with the projective Hilbert space $\mathbb{P}(\mathcal{H}):=
  (\mathcal{H}\setminus\{0\})/\mathbb{C}^{\ast}$, where $\mathbb{C}^{\ast}:=\mathbb{C}\setminus\{0\}$ denotes the multiplicative
group of nonzero complex scalars acting on $\mathcal{H}\setminus\{0\}$ by complex
rescaling, $(\lambda,\psi)\mapsto \lambda\psi$. Elements of $\mathbb{P}(\mathcal{H})$ are equivalence classes $[\psi]=\{\lambda\psi:\lambda\in\mathbb{C}^{\ast}\}$,which represent physical pure states.
One may equivalently normalise vectors in $\mathcal{H}$ and
then identify states that differ by a global phase, leading to the same
projective space.

The manifold
$\mathbb{P}(\mathcal{H})$ is naturally diffeomorphic to the complex
projective space $\mathbb{C}P^{n}$.  We implicitly fix a unit-norm representative and identify tangent vectors with horizontal lifts orthogonal to the fibre. The tangent space at a point $[\psi]\in\mathbb{P}(\mathcal{H})$ admits the
representation
\[
  T_{[\psi]}\mathbb{P}(\mathcal{H})
  \;\simeq\;
  \bigl\{
    \delta\psi\in\mathcal{H}
    \;\big|\;
    \langle \psi , \delta\psi \rangle = 0
  \bigr\},
\]
where the orthogonality condition removes the vertical direction associated
with the $\mathbb{C}^{\ast}$--action. This identification is independent of
the choice of representative $\psi$. 
$\mathbb{P}(\mathcal{H})$ carries a canonical
Riemannian metric, known as the \emph{Fubini--Study metric}, induced by the
Hermitian structure of $\mathcal{H}$. Given two tangent vectors
$\delta\psi_1,\delta\psi_2\in T_{[\psi]}\mathbb{P}(\mathcal{H})$, this metric is defined by
\[
  g_{FS}([\psi])(\delta\psi_1,\delta\psi_2)
  \;:=\;
  \operatorname{Re}\,\langle \delta\psi_1 , \delta\psi_2 \rangle,
\]
where $\psi$ is assumed to be normalized, $\langle \psi,\psi\rangle = 1$.
This definition is invariant under changes of representative and
yields a well-defined Riemannian structure on $\mathbb{P}(\mathcal{H})$. Equipped with the Fubini--Study metric, $\mathbb{P}(\mathcal{H})$ becomes a
compact K\"ahler manifold, endowed with a canonical Levi--Civita connection \cite{anandan1990}.

The special unitary group $G := \mathrm{SU}(n+1)$ acts transitively on
$\mathbb{P}(\mathcal{H})$ by $[\psi] \mapsto [U\psi]$, $U \in G$, and the isotropy subgroup of a reference state $[\psi_0]$ is isomorphic
to $K := S(\mathrm{U}(1)\times \mathrm{U}(n)) \simeq \mathrm{U}(n)
$. Hence, $\mathbb{P}(\mathcal{H})$ can be identified
with the homogeneous space $\mathbb{P}(\mathcal{H}) \simeq G/K$. Throughout, we regard curves on $\mathbb{P}(\mathcal{H})$ as projected
orbits of curves on $G$, and we formulate the dynamics on $G$ while
ensuring compatibility with the quotient structure.

Let $x(t) \in \mathbb{P}(\mathcal{H})\simeq G/K$ be a smooth curve on the projective
space. A \emph{quantum Riemannian cubic} on $\mathbb{P}(\mathcal{H})$ is defined as
a critical point of the second-order variational problem
\eqref{eq:RCfunctional}.
To obtain an explicit formulation, we lift the problem to $G=SU(n+1)$. Let $g(t)\in SU(n+1)$ be a smooth curve and define the associated projected curve
$x(t):=\pi(g(t))$, where $\pi:SU(n+1)\to G/K$, $\pi(g):=gK$, denotes the canonical projection. Equivalently, fixing a reference vector
$\psi_0\in\mathcal H$, one may write $x(t)=[g(t)\psi_0]$. Let $\xi(t):=g(t)^{-1}\dot g(t)\in\mathfrak{su}(n+1)$ denote the left-trivialised
velocity, and endow $\mathfrak{su}(n+1)$ with the $\mathrm{Ad}$-invariant inner
product $\langle A,B\rangle:=-\operatorname{tr}(AB)$. This choice induces the
normal homogeneous metric on $G/K$, compatible with the Fubini--Study geometry.

In this lifted representation, the covariant acceleration of the
projected curve can be expressed in terms of $\xi$ and its derivatives.
Up to vertical components associated with the isotropy subgroup, the
Riemannian cubic problem on $\mathbb{P}(\mathcal{H})$ is equivalent to a higher-order
variational problem on $SU(n+1)$ of the form $\mathcal{S}[g]
  =
  \frac12 \int_{0}^{T}
  \big\| \dot{\xi}(t) \big\|^2 \, \mathrm{d}t$, subject to suitable horizontality conditions ensuring that the dynamics
projects correctly to $\mathbb{P}(\mathcal{H})$.

In the Schrödinger picture, the Lie algebra element
$iH(t) := \xi(t)$ corresponds to a time-dependent Hamiltonian
$H(t) \in \mathfrak{u}(n+1)$. The curve $g(t)$ satisfies the \textit{Schrödinger
equation} $\dot g(t) = - i H(t) g(t)$, and the projected curve $x(t) = [g(t)\psi_0]$ represents the evolution
of a pure quantum state. The action functional measures the
total variation of the Hamiltonian through $||\dot H(t)||^{2}$, and quantum Riemannian cubics correspond to quantum evolutions that
minimise changes in the Hamiltonian while maintaining smoothness of the
state trajectory \cite{brody2012quantum}, \cite{abrunheiro2018general}.

To incorporate obstacle avoidance we introduce a
smooth potential $V : \mathbb{P}(\mathcal{H}) \longrightarrow \mathbb{R}$. The lifted potential $V_{\mathrm{ext}}(g) := V(\pi(g))$ is invariant under the action of the isotropy subgroup $K$, that
is, $V_{\mathrm{ext}}(gk) = V_{\mathrm{ext}}(g)$, $\forall\,k \in K$, and therefore defines a well-posed function on the homogeneous space.
Quantum Riemannian cubics with obstacle avoidance are given by
\begin{equation}
\label{eq:continuous-balance}
\frac{\mathrm d}{\mathrm dt}\mu
+
\operatorname{ad}^*_{\xi}\mu
=
-\,\operatorname{grad}_{G} V_{\mathrm{ext}}(g),
\,
\mu(t) := \dot{\xi}(t) \in \mathfrak{g}^*,
\end{equation}
where $\operatorname{grad}_{G} V_{\mathrm{ext}}(g)\in\mathfrak{g}^*$ denotes the
Riemannian gradient of the lifted potential with respect to the normal
homogeneous metric on $G$, identified via left trivialisation.
Owing to the $K$--invariance of $V_{\mathrm{ext}}$, this gradient is horizontal
and therefore projects consistently onto the base manifold
$\mathbb{P}(\mathcal{H})\simeq G/K$. Identifying $\xi(t)=iH(t)$ with a time-dependent Hamiltonian
$H(t)\in\mathfrak{u}(n+1)$, equation~\eqref{eq:continuous-balance} yields the
modified \emph{second-order Schr\"odinger equation} with potential
\begin{equation}
\label{eq:second-order-schrodinger}
\ddot H \;+\; i[H,\dot H]
\;=\;
-\,\operatorname{grad}_{G} V_{\mathrm{ext}}(g),
\end{equation}
where the right-hand side is understood via left trivialisation and the trace
inner product $\langle A,B\rangle=-\operatorname{tr}(AB)$. Given a solution $\xi(t)=iH(t)$ of \eqref{eq:second-order-schrodinger}, the unitary
evolution is recovered from $\dot g(t)=-iH(t)g(t)$ with initial condition
$g(0)=I$. The corresponding quantum state trajectory is then obtained as
$x(t)=[g(t)\psi_0]\in\mathbb{P}(\mathcal{H})$.

\section{Variational Integrators for Quantum Riemannian Cubics with Obstacle Avoidance}

We briefly recall the framework of Lie group variational integrators
(LGVI) for second-order variational problems on Lie groups  \cite{colombo2012discrete}, \cite{colombo2015variational}.  Let $G$ be a Lie group, and let $h>0$ denote the time step. A discrete 
path is a sequence $(g_0,\dots,g_N)\subset G$ approximating $g(t)$ at
times $t_k = kh$. For a Lagrangian $L:TG\to\mathbb{R}$, a discrete
Lagrangian $L_d : G\times G\to\R$ approximates the action integral over
one time step: $L_d(g_k,g_{k+1}) \approx \int_{t_k}^{t_{k+1}} L(g,\dot g)\,dt$. The discrete action sum $\displaystyle{\mathcal{S}_d(g_0,\dots,g_N) = 
  \sum_{k=0}^{N-1} L_d(g_k,g_{k+1})}$ is then extremised over variations in $g_k$ that fix $g_0$ and $g_N$,
leading to discrete Euler--Lagrange equations on $G\times G$.

For a second-order Lagrangian $L(g,\dot g,\ddot g)$, it is natural to
use a discrete Lagrangian $L_d : G\times G\times G \to \R$ depending on
triples $(g_{k-1},g_k,g_{k+1})$ to approximate the action integral over a two-step time interval. A map $L_d : G\times G\times G \to \R$ is a second-order discrete Lagrangian if $\displaystyle{L_d(g_{k-1},g_k,g_{k+1})
  \approx \int_{t_{k-1}}^{t_{k+1}} L(g,\dot g,\ddot g)\, dt}$ for curves $g(t)$ satisfying $g(t_{k-1})=g_{k-1}$,
$g(t_k)=g_k$, $g(t_{k+1})=g_{k+1}$.

Given a discrete path $(g_0,\dots,g_N)$ we define the discrete action for $L_d$ as 
\begin{equation}
\label{eq:discreteAction}
  \mathcal{S}_d(g_0,\dots,g_N)
  =
  \sum_{k=1}^{N-1} L_d(g_{k-1},g_k,g_{k+1}).
\end{equation}

A sequence $(g_0,\dots,g_N)\subset G$ is a  is a critical point of the discrete
action \eqref{eq:discreteAction} with fixed $(g_0,g_1)$ and $(g_{N-1},g_N)$ if and only if for each $k=1,\dots,N-1$, the discrete second-order Euler--Lagrange
equations
\begin{align}
\label{eq:DEL2}
 0=&D_2 L_d(g_{k-1},g_k,g_{k+1})
  + D_3 L_d(g_{k-2},g_{k-1},g_k)\nonumber
  \\&+ D_1 L_d(g_k,g_{k+1},g_{k+2})
\end{align}
hold, where $D_i L_d$ denotes the differential of $L_d$ with respect to its
$i$-th argument, viewed as an element of the corresponding cotangent
space via left trivialisation (see \cite{colombo2012discrete}, \cite{colombo2015variational}). This choice of boundary conditions is the discrete analogue of fixing
position and velocity at the endpoints in the continuous-time
variational problem \cite{colombo2016geometric}.

\subsection{Variational integrators for Riemannian cubics with obstacle avoidance}

We now construct a second-order variational integrator for Riemannian
cubics with obstacle avoidance on a homogeneous space $G/K$. For
simplicity, the formulation is presented on the Lie group $G$; the
projection to the quotient space $G/K$ is then handled through the
homogeneous structure.

Let $h>0$ denote the time step and define the group increments $W_k := g_k^{-1} g_{k+1} \in G$. For sufficiently small $h$, each increment is represented by a reduced
Lie algebra element $\xi_k \in \frakg$ via a retraction map. Throughout
this work we employ the Cayley transform for matrix Lie groups, $W_k = \mathrm{cay}(h\,\xi_k)
  := (I - \tfrac{h}{2}\xi_k)^{-1}(I + \tfrac{h}{2}\xi_k)$, which provides a second-order accurate and symmetric approximation of
the exponential map \cite{colombo2012discrete}.

A second-order approximation of the covariant acceleration is obtained via the centred second
difference $\displaystyle{a_k :=
  \frac{\xi_{k+1} - 2\xi_k + \xi_{k-1}}{h^2}}$. Let $\langle\cdot,\cdot\rangle$ be an inner product on $\frakg$
corresponding to a left-invariant Riemannian metric on $G$. The discrete
kinetic term is defined as $K_d(\xi_{k-1},\xi_k,\xi_{k+1})
  =
  \frac{h}{2}\,\|a_k\|^2$. The obstacle potential is discretised using a midpoint quadrature rule.
Let $g_{k+\frac12} := g_k\,\mathrm{cay}\!\left(\tfrac{h}{2}\xi_k\right)$ denote the midpoint configuration along the discrete trajectory. The
discrete potential term is defined as $V_d(g_{k-1},g_k,g_{k+1})
  =
  h\,V_{\mathrm{ext}}(g_{k+\frac12})$, where $V_{\mathrm{ext}} : G \to \mathbb{R}$ is the pullback of an obstacle
potential $V : G/K \to \mathbb{R}$ under 
$\pi : G \to G/K$. 

The discrete second-order Lagrangian $L_d:G\times G\times G\to\mathbb{R}$ for the Riemannian cubics
with obstacle avoidance is defined as $L_d=K_d+V_d$. We assume that the discrete Lagrangian satisfies a
regularity condition, namely that the mixed second derivative with
respect to consecutive configurations is non-singular. Under this regularity hypothesis, the discrete
second-order Euler--Lagrange equations define a \emph{local discrete
flow} $\Upsilon_{L_d} : G \times G \longrightarrow G \times G$, $\Upsilon_{L_d}(g_{k-1}, g_k) = (g_k, g_{k+1})$, by the implicit function theorem, for sufficiently small time steps.


Introducing the reduced variables $\xi_k$ via the Cayley transform and
defining $\mu_k := a_k \in \frakg^*$, identified with $\frakg$ through the chosen inner product, the discrete
second-order Euler--Lagrange equations are 
\begin{equation*}
\operatorname{Ad}_{W_k^{-1}}^* \mu_k
-
\operatorname{Ad}_{W_{k-1}^{-1}}^* \mu_{k-1}
=
\mu_{k+1} - \mu_k
-
h\,\mathbf{d}V_{\mathrm{ext}}(g_{k+\frac12}),
\end{equation*}
where $\operatorname{Ad}^*$ denotes the coadjoint action and
$\mathbf{d}V_{\mathrm{ext}}(g) \in \frakg^*$ is the left-trivialised
gradient of the obstacle potential.

Once the reduced variables have been determined, the configuration
update on the group is obtained from the reconstruction equation $g_{k+1} = g_k W_k$. The corresponding physical trajectory on the homogeneous space is then
given by the projected sequence $x_k := \pi(g_k) \in G/K$. Owing to the $K$-invariance of the discrete Lagrangian and of the
obstacle potential, the sequence $(x_k)$ is independent of the chosen
representative $g_k$ and defines a well-posed discrete trajectory on
$G/K$.

\subsection{Variational integrators for quantum Riemannian cubics with obstacle avoidance}
\label{subsec:VI_projective}

To construct a structure--preserving numerical scheme, we discretise
the variational principle directly on $\mathbb P(\mathcal H)$.
Let $h>0$ be a fixed time step and let $\big([\psi_0],[\psi_1],\dots,[\psi_N]\big)
   \subset \mathbb P(\mathcal H)$ be a discrete path approximating a trajectory $\gamma$ at times $t_k=kh$. The discrete Lagrangian $L_d:\mathbb P(\mathcal H)\times\mathbb P(\mathcal H)\times\mathbb P(\mathcal H) \to \mathbb R$ is chosen as an approximation of the integral of the continuous Lagrangian over one time step.
Specifically, we define
\begin{equation}\label{eq:discrete_Lagrangian_projective}
   L_d([\psi_{k-1}],[\psi_k],[\psi_{k+1}])
   =
   \frac{1}{2h}\,
   \big\|\mathcal A_k\big\|^2
   + h\,V([\psi_k]),
\end{equation}
where $\|\cdot\|$ denotes the norm induced by the Fubini--Study metric on
$\mathbb P(\mathcal H)$, and
$\mathcal A_k\in T_{[\psi_k]}\mathbb P(\mathcal H)$ is a
discrete approximation of the covariant acceleration at time $t_k$, constructed
from $([\psi_{k-1}],[\psi_k],[\psi_{k+1}])$.
It is defined intrinsically using the Riemannian logarithm associated
with the metric. Let $\psi_k\in\mathcal H$ be a unit-norm
representative of $[\psi_k]$. For $j\in\{k-1,k+1\}$, define
\[
  \log_{[\psi_k]}([\psi_j])
  :=
  \frac{\theta_{kj}}{\sin\theta_{kj}}
  \bigl(
    \psi_j - \langle\psi_k,\psi_j\rangle\,\psi_k
  \bigr)
  \;\in\;
  T_{[\psi_k]}\mathbb P(\mathcal H),
\]
where
\(
  \theta_{kj} := \arccos\bigl(|\langle\psi_k,\psi_j\rangle|\bigr)
\)
is the distance between $[\psi_k]$ and $[\psi_j]$.
The discrete acceleration is then given by
\[
  \mathcal A_k
  :=
  \frac{1}{h^2}
  \left(
    \log_{[\psi_k]}([\psi_{k+1}])
    -
    \log_{[\psi_k]}([\psi_{k-1}])
  \right).
\]

The discrete action sum is then given by\newline
$\displaystyle{\mathcal S_d
   =\sum_{k=1}^{N-1}
   L_d([\psi_{k-1}],[\psi_k],[\psi_{k+1}])}$, with fixed boundary points
$[\psi_0]$, $[\psi_1]$, $[\psi_{N-1}]$, and $[\psi_N]$. Critical points of the discrete action with respect to variations of the interior points
$([\psi_1],\dots,[\psi_{N-1}])$
yield the discrete second-order Euler--Lagrange equations \eqref{eq:DEL2} for this second--order
variational problem which defines an implicit update rule
relating five consecutive points of the discrete trajectory, $([\psi_{k-2}],[\psi_{k-1}],[\psi_k],[\psi_{k+1}],[\psi_{k+2}])$. Under the regularity assumption on $L_d$, this implicit relation determines a discrete local flow on $\mathbb P(\mathcal H)$.

For analytical purposes and, in particular, for efficient numerical
implementation, it is often convenient to lift this formulation to a
unitary group acting transitively on $\mathbb P(\mathcal H)$. Let $\mathcal H\simeq\mathbb C^{n+1}$ and consider the natural left action
of the unitary group $U(n+1)$ on $\mathcal H$.
Fix a reference state $\psi_0\in\mathcal H$ with $\|\psi_0\|=1$ and denote
by $[\psi_0]\in\mathbb P(\mathcal H)$ its associated ray.
The isotropy subgroup of this action is
\[
   K
   :=
   \big\{ U\in U(n+1)\ :\ U\psi_0 = e^{i\theta}\psi_0 \big\}
   \simeq U(1)\times U(n),
\]
so that the projective space may be identified with the homogeneous
space $\mathbb P(\mathcal H) \simeq U(n+1)/K$. Any discrete projective trajectory
$([\psi_0],\dots,[\psi_N])$
admits a lifted representation $[\psi_k] = \pi(U_k),
   \,
   U_k\in U(n+1)$, where the projection $\pi:U(n+1)\to\mathbb P(\mathcal H)$ is\ $\pi(U):=[U\psi_0]$.


Given a discrete Lagrangian as in~\eqref{eq:discrete_Lagrangian_projective},
we define its lifted counterpart $L_d^{\mathrm{ext}}:
   U(n+1)\times U(n+1)\times U(n+1) \to\mathbb R$ by
\begin{equation*}
   L_d^{\mathrm{ext}}(U_{k-1},U_k,U_{k+1})
   :=
   L_d\big(
      \pi(U_{k-1}),\,
      \pi(U_k),\,
      \pi(U_{k+1})
   \big).
\end{equation*} By construction, $L_d^{\mathrm{ext}}$ is invariant under the right action of the
isotropy subgroup $K$, namely, for $G\in K$
\[
   L_d^{\mathrm{ext}}(U_{k-1}G,U_kG,U_{k+1}G)
   =
   L_d^{\mathrm{ext}}(U_{k-1},U_k,U_{k+1}),
\]
and therefore descends to a discrete Lagrangian on $\mathbb P(\mathcal H)$.

Let $U_k\in U(n+1)$ be a lifted discrete trajectory and define the
left--trivialised discrete increments $W_k := U_k^{-1}U_{k+1}\in U(n+1)$. For sufficiently small time step $h>0$, each increment is parametrised by a Lie
algebra element $\Xi_k\in\mathfrak{u}(n+1)$ via the Cayley map, $W_k = \mathrm{cay}(h\,\Xi_k)$. The elements $\Xi_k$ represent discrete generators of unitary evolution and
constitute a discrete analogue of the Schr\"odinger Hamiltonian generator. With this notation, the lifted discrete Lagrangian
$L_d^{\mathrm{ext}}(U_{k-1},U_k,U_{k+1})$
can be written in terms of the variables $(U_k,\Xi_{k-1},\Xi_k)$ as
\begin{equation}\label{eq:Ld_Xi}
   L_d^{\mathrm{ext}}
   =
   \frac{h}{2}\Big\|
      \frac{\Xi_k-\Xi_{k-1}}{h}
   \Big\|^2
   + h\,V\big(\pi(U_k)\big),
\end{equation}
where the norm is induced by the bi--invariant inner product
$\langle X,Y\rangle := -\operatorname{Re}\operatorname{tr}(X^\dagger Y)$
on $\mathfrak{u}(n+1)$.
The discrete acceleration $\mathcal A_k$ is thus approximated by
$\frac{\Xi_k-\Xi_{k-1}}{h}$, which is invariant under the adjoint action of
$U(n+1)$.

Variations of $U_k$ induce variations of $\Xi_{k-1}$ and $\Xi_k$ through the
Cayley map \cite{colombo2012discrete}. Introducing the discrete momenta $\mu_k := \frac{\Xi_k-\Xi_{k-1}}{h}
   \;\in\;
   \mathfrak{u}(n+1)^*$, the discrete second-order Euler--Lagrange equations are:
\begin{equation}\label{eq:DEL_Ad}
   \operatorname{Ad}_{W_k^{-1}}^* \mu_k
   -
   \operatorname{Ad}_{W_{k-1}^{-1}}^* \mu_{k-1}
   =
   \mu_{k+1}-\mu_k
   -
   h\,\mathbf F_k,
\end{equation}
where $\operatorname{Ad}^*$ denotes the coadjoint action of $U(n+1)$ and
$\mathbf F_k\in\mathfrak{u}(n+1)^*$ is the discrete force induced by the obstacle
avoidance potential, defined by
\[
   \langle \mathbf F_k,\eta\rangle
   =
   \mathrm dV\big(\pi(U_k)\big)
   \cdot
   \big(
      \mathrm d\pi(U_k)\cdot (\eta U_k)
   \big),
   \,\, \forall\,\eta\in\mathfrak{u}(n+1).
\]

Equation~\eqref{eq:DEL_Ad} defines an implicit second--order difference
equation on the Lie algebra $\mathfrak{u}(n+1)$ coupled with the group
update $U_{k+1} = U_k\,\mathrm{cay}(h\,\Xi_k)$, which may be interpreted as a \emph{discrete second-order Schr\"odinger flow with
with obstacle avoidance}. By construction, $[\psi_k] = \pi(U_k)$ satisfies the  discrete second-order Euler--Lagrange equations on
$\mathbb P(\mathcal H)$.
Thus, solving the lifted system
\eqref{eq:DEL_Ad} yields a variational
integrator on the quantum state space after projection by $\pi$.

Let $\mathcal G\subset U(n+1)$ be a (closed) Lie subgroup acting by left
multiplication on $U(n+1)$. If the obstacle potential satisfies $V\big(\pi(GU)\big)=V\big(\pi(U)\big)$, $\forall\,G\in\mathcal G$, then the lifted discrete Lagrangian $L_d^{\mathrm{ext}}$ is invariant under this action.
The discrete Noether theorem for second-order variational problems implies the
conservation of the discrete momentum map
\begin{equation}\label{eq:discrete_momentum}
   J_k
   =
   \operatorname{Ad}^*_{W_{k-1}^{-1}}
   \Big(
      \frac{\Xi_k-\Xi_{k-1}}{h}
   \Big)
   \in \mathfrak g^*,
\end{equation}
where $\mathfrak g$ denotes the Lie algebra of $\mathcal G$.
Given a reduced discrete trajectory
$([\psi_k])\subset\mathbb P(\mathcal H)$
together with a fixed discrete momentum level
$J\in\mathfrak g^*$, a lifted trajectory on $U(n+1)$ may be reconstructed
by choosing generators $\Xi_k$ satisfying \eqref{eq:discrete_momentum} and updating $U_{k+1} = U_k\,\mathrm{cay}(h\,\Xi_k)$. 
\subsection{Qubit case: variational integrators on $SU(2)$}
\label{sec:qubit_LGVI}

For a single qubit, the relevant Lie group is $SU(2)$ with Lie algebra
$\mathfrak{su}(2)$.
Pure quantum states are represented by points on the Bloch sphere
$S^2\simeq\mathbb P(\mathbb C^2)$. A unitary $U\in SU(2)$ acts on a reference state $\ket{\psi_0}$ and
induces the Bloch vector $\vec n = \pi(U)\in S^2$, where $\pi:SU(2)\to S^2$ is the homogeneous projection associated with
the adjoint action.
An obstacle avoidance potential $V:S^2\to\mathbb R$ is lifted to $SU(2)$ by $V_{\mathrm{ext}}(U)=V(\pi(U))$.

The Lie algebra $\mathfrak{su}(2)$ consists of traceless skew--Hermitian
$2\times2$ matrices.
Using the Pauli basis
$\vec\sigma=(\sigma_x,\sigma_y,\sigma_z)$, any element
$X\in\mathfrak{su}(2)$ can be written as $X = -\frac{i}{2}\,\vec\omega\cdot\vec\sigma$,  $\vec\omega\in\mathbb R^3$. With the inner product $\langle X,Y\rangle=-2\,\mathrm{tr}(XY)$, this identification induces the Euclidean inner product on
$\mathbb R^3$: $\langle X,Y\rangle
   =
   \vec\omega\cdot\vec\eta$. Under this identification, the adjoint and coadjoint actions coincide
and are given by the vector cross product  $\operatorname{ad}^*_{\vec\omega}\vec v
   =
   \vec\omega\times\vec v$.

Let $(U_0,\dots,U_N)\subset SU(2)$ be a discrete unitary trajectory and
define the group increments $W_k = U_k^{-1}U_{k+1}$. Using the Cayley transform, we parametrise $W_k = \mathrm{cay}(hX_k)$, $X_k=-\frac{i}{2}\vec\omega_k\cdot\vec\sigma$. The group update rule $U_{k+1}
   =
   U_k\,\mathrm{cay}\!\left(
      -\frac{i h}{2}\vec\omega_k\cdot\vec\sigma
   \right)$ defines a discrete unitary Schr\"odinger flow.  The induced evolution on the Bloch sphere is obtained by projection.
Denoting $\vec n_k:=\pi(U_k)\in S^2$, the Cayley update induces the rotation $\vec n_{k+1}
   =
   R(\vec\omega_k)\,\vec n_k$, where $R(\vec\omega_k)\in SO(3)$ is the rotation associated with the adjoint
action of $\mathrm{cay}(hX_k)$. With this notation, the discrete second--order Lagrangian reads \[
   L_d(U_{k-1},U_k,U_{k+1})
   =
   \frac{h}{2}
   \Big\|
      \frac{\vec\omega_k-\vec\omega_{k-1}}{h}
   \Big\|^2
   +
   h\,V(\vec n_k).
\]

Introducing the discrete momenta $\vec\mu_k := \frac{\vec\omega_k-\vec\omega_{k-1}}{h}$, the discrete second--order Euler--Lagrange equations are
\[
   R_k\,\vec\mu_k
   -
   R_{k-1}\,\vec\mu_{k-1}
   =
   \vec\mu_{k+1}-\vec\mu_k
   -
   h\,\vec n_k\times\nabla V(\vec n_k),
\]
where $R_k=\operatorname{Ad}_{W_k^{-1}}^*\in SO(3)$ denotes the rotation induced
by the Cayley increment $W_k$. Using the identification $\operatorname{ad}^*_{\vec\omega}\vec v
=\vec\omega\times\vec v$, the above equation reduces, in the small--step limit,
to the vectorial form
\[
   \frac{\vec\omega_{k+1}-2\vec\omega_k+\vec\omega_{k-1}}{h}
   +
   \vec\omega_k\times
   \frac{\vec\omega_k-\vec\omega_{k-1}}{h}
   +
   \vec n_k\times\nabla V(\vec n_k)
   =
   0.
\]

Let $\hat R_\theta
   :=
   \exp\!\left(-\tfrac{i}{2}\theta\,\sigma_z\right)\in SU(2)$ denote axial rotations about the $z$--axis.
If $V(\vec n)=\widetilde V(n_z)$, then $L_d$ is invariant under the left action
$U_k\mapsto \hat R_\theta U_k$.
By the discrete Noether theorem, the axial component of the discrete momentum
is exactly preserved along the discrete flow, namely $J_z^{(k)}
   :=
   \vec\mu_k\cdot\vec e_z
   =
   \left(
      \frac{\vec\omega_k-\vec\omega_{k-1}}{h}
   \right)\cdot\vec e_z
   \equiv J_z$. Given a discrete Bloch trajectory $(\vec n_k)$ and a fixed value of $J_z$, the lifted unitary trajectory $(U_k)$ is reconstructed by
solving the discrete Euler--Lagrange equations subject to the constraint $\vec e_z\cdot
   \frac{\vec\omega_k-\vec\omega_{k-1}}{h}
   =
   J_z$, and updating $U_k$ via the Cayley flow
$U_{k+1}=U_k\,\mathrm{cay}\!\left(-\tfrac{i h}{2}\vec\omega_k\cdot\vec\sigma\right)$.

We now illustrate the above variational integrator by means of numerical simulations. We consider a single qubit with Bloch vector $\vec n(t)\in S^2$ and construct a quantum Riemannian cubic with obstacle avoidance connecting two prescribed boundary states while remaining outside a prescribed \emph{forbidden region} of the Bloch sphere. Such regions naturally arise in quantum control applications, for instance to limit population of decoherence–sensitive states or to avoid leakage subspaces.

As a representative test case, we choose $\vec n(0)=(0,0,-1)$ (the south pole of the Bloch sphere) and impose a terminal condition $\vec n(T)$ located just outside a forbidden spherical cap centred at the north pole. The initial velocity is tangent to the sphere and oriented along a simple great–circle direction, while the terminal velocity is left free.
The forbidden set is modelled as a neighbourhood of the north pole $\vec n_\star=(0,0,1)$, representing a region of the Bloch sphere associated with undesirable or fragile quantum states. To encode this constraint we introduce an axially symmetric obstacle potential depending only on the $z$–component of the Bloch vector,
\begin{equation}
   V(\vec n)
   =
   \widetilde V(n_z)
   =
   \frac{\tau}{1 + \Big(\dfrac{1 - n_z}{D}\Big)^{2N}},
\end{equation}
with parameters $\tau>0$, $D>0$, and $N\in\mathbb N$.
This penalises proximity to an entire spherical cap around $n_z=1$, and therefore models the avoidance of a forbidden region with finite angular thickness. The level set $d_{S^2}(\vec n,P)=D$ defines the boundary of the forbidden region, while the potential grows inside it as $N$ increases. 

The time interval $[0,T]$ is discretised into $N$ uniform steps of size $h=T/K$, and we consider a discrete lifted trajectory $(U_0,\dots,U_N)\subset SU(2)$ satisfying $\pi(U_0)=\vec n(0)$ and $\pi(U_N)=\vec n(T)$, where $\pi:SU(2)\to S^2$ denotes the  projection onto the Bloch sphere. The resulting discrete second–order Euler–Lagrange equations define a nonlinear boundary–value problem for the sequence ($U_k$), which is solved globally as a critical point of the discrete action subject to fixed boundary conditions, rather than by forward time marching. This implicit variational formulation is particularly well suited to receding–horizon and MPC–type implementations, where feasibility with respect to state constraints is essential. In practice, an equivalent shooting formulation on the initial generator may also be employed without altering the variational structure.

The discrete variational solver for quantum Riemannian cubics with obstacle
avoidance is summarized as follows:

\begin{table}[h!]
\label{tab:solver}
\begin{tabular}{p{2cm} p{5.8cm}}

\textbf{Stage} & \textbf{Operation} \\

Initialisation
&
Set horizon $T$ and steps $K$, define $h=T/K$, and initialise interior unitaries
$(U_1,\dots,U_{K-1})$ by geodesic interpolation or warm start. \\[0.3em]

Increment computation
&
Compute Lie algebra increments $X_k\in\mathfrak{su}(2)$ from
$U_{k+1}=U_k\,\mathrm{cay}(hX_k)$. \\[0.3em]

Variational residual
&
Form the second--order discrete Euler--Lagrange residual
$R_k(U_{k-2},\dots,U_{k+2})$ for $k=2,\dots,K-2$. \\[0.3em]

Update step
&
Perform a Lie--group Newton or Gauss--Seidel step as part of a global nonlinear
solve of the discrete Euler--Lagrange equations,
$U_k\leftarrow\exp(\delta_k)U_k$.
\\[0.3em]

Iteration
&
Repeat residual evaluation and updates until convergence. \\[0.3em]

Projection
&
Recover Bloch vectors $\vec n_k=\pi(U_k)\in S^2$. \\
\end{tabular}
\end{table}

We consider the boundary–value problem described above, namely a transition from the south pole of the Bloch sphere to a terminal state located just outside the forbidden region surrounding the north pole. The obstacle potential parameters are chosen as $\tau=30$, $D=0.35$, and $N=6$, yielding a stiff but smooth repulsive barrier around the forbidden cap. The final time is set to $T=1$, and the interval is discretised using $N=100$ time steps, corresponding to a step size $h=0.01$.

The parameter $D$ controls the angular extent of the forbidden region, while $N$ governs the sharpness of the repulsive barrier. In the qubit setting, the $z$–component of the Bloch vector satisfies $n_z = 2|\langle 0|\psi\rangle|^2 - 1$, so that values $n_z\approx 1$ correspond to high population of the computational basis state $|0\rangle$. $D=0.35$ therefore penalises trajectories already when $|\langle 0|\psi\rangle|^2 \gtrsim 0.82$, modelling avoidance of a high–population region. In the numerical experiment, the terminal condition $\vec n(T)$ is prescribed just outside the forbidden region, at a geodesic distance $d_{S^2}(\vec n(T),P)=D+\varepsilon$ with $\varepsilon>0$ small. This ensures that the boundary condition itself is admissible, while still forcing the trajectory to evolve close to the boundary of the forbidden cap as we see in Fig. \ref{fig:qubit_obstacle}.

\begin{figure}[h!]
\centering
\includegraphics[width=0.9\linewidth]{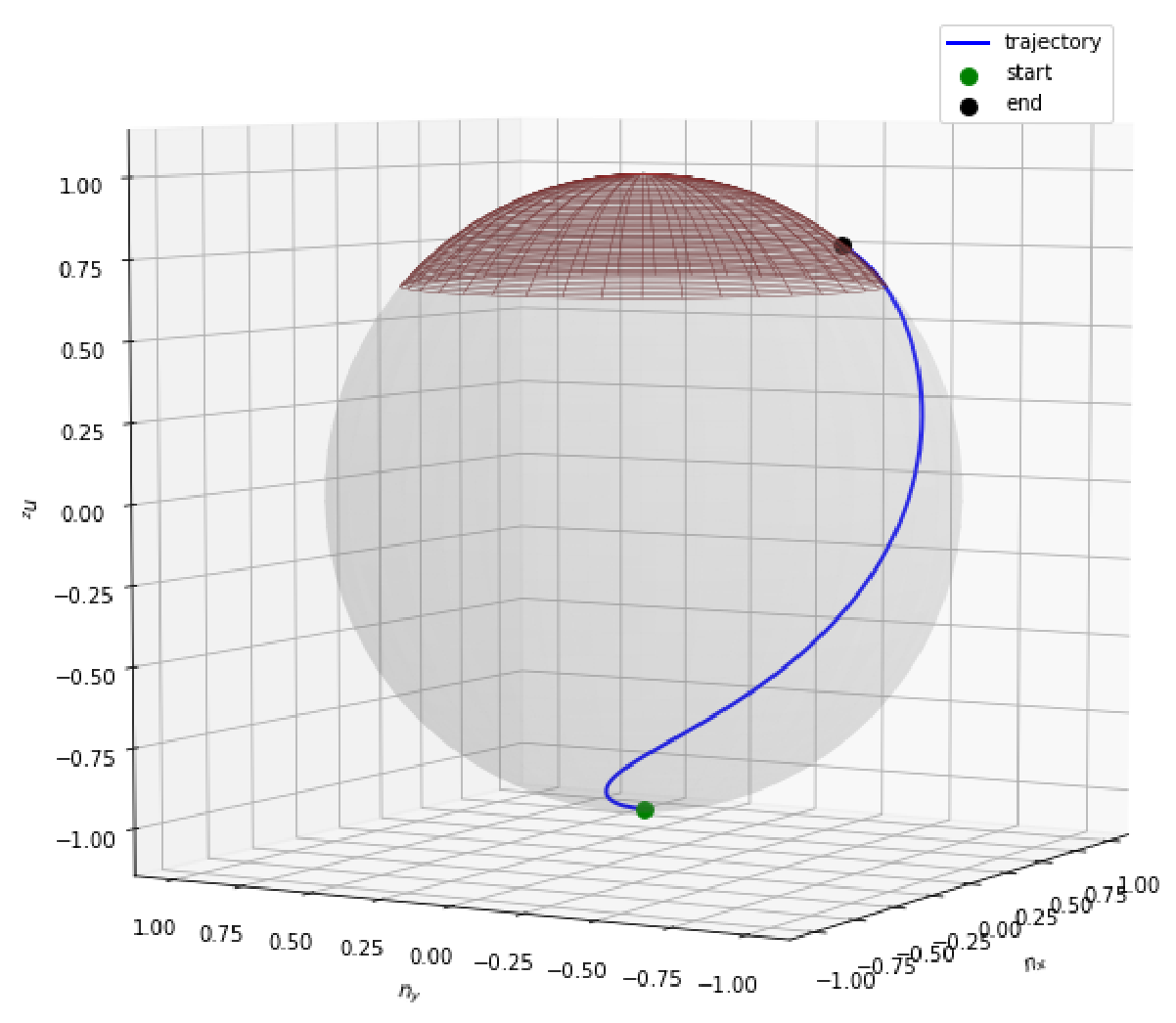}
\caption{Quantum Riemannian cubic  connecting the south pole to a terminal state located just outside a forbidden spherical cap around the north pole. The red cap denotes the forbidden region induced by the obstacle potential, while the blue curve shows the resulting cubic-shape trajectory.}
\label{fig:qubit_obstacle}
\end{figure}

To assess the geometric properties of the LGVI, we focus on the preservation of
the configuration constraint and discrete momentum conservation.
 As shown in the Fig \ref{fig:structure_preservation}, explicit Rungue-Kutta (RK) schemes violate the
spherical constraint $|\vec n|=1$ unless augmented with an ad hoc
projection or correction step, while the LGVI preserves the
constraint as a consequence of its variational
construction. Moreover, due to the symmetry of the discrete Lagrangian, the LGVI
preserves the associated discrete momentum map, in agreement
with the discrete Noether theorem. Since standard RK schemes do
not arise from a discrete variational principle, no corresponding
discrete momentum map exists, and no exact discrete conservation law can
be expected for such methods.

\begin{figure}[t]
  \centering
  \begin{minipage}{0.45\textwidth}
    \centering
    \includegraphics[width=\textwidth]{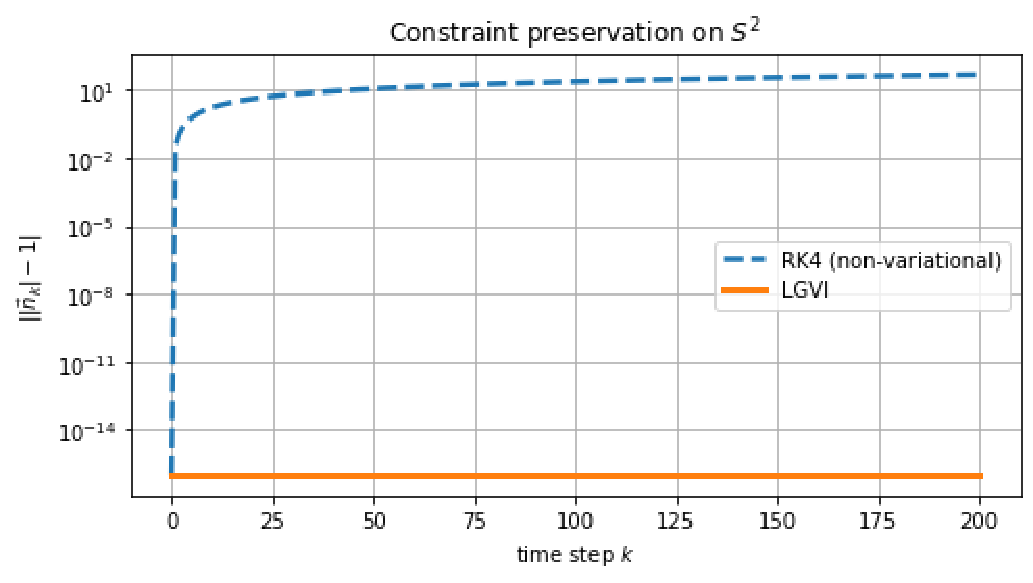}
  \end{minipage}
  \hfill
  \begin{minipage}{0.45\textwidth}
    \centering
    \includegraphics[width=\textwidth]{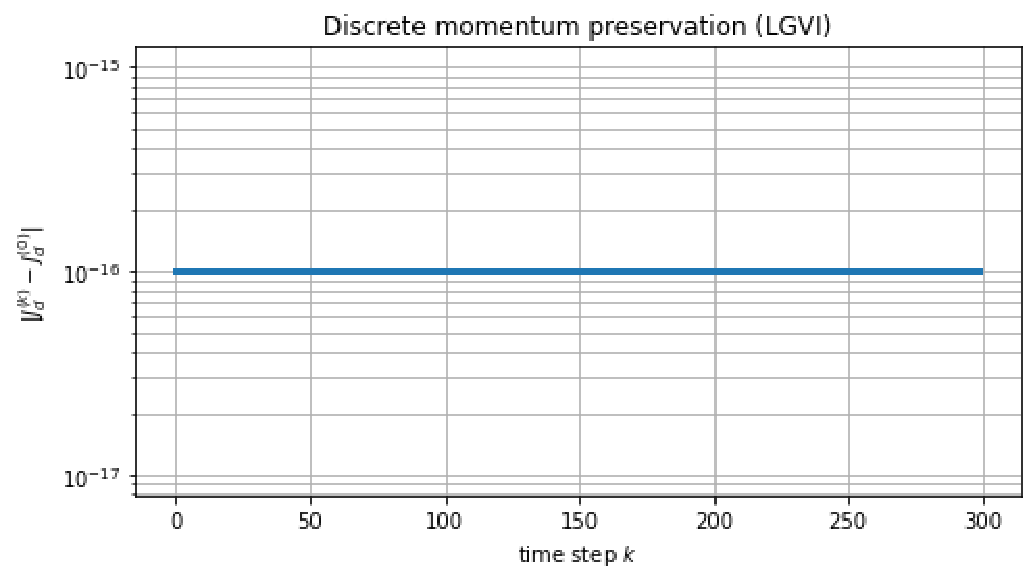}
  \end{minipage}

  \caption{\emph{Top:} Constraint preservation on $S^2$.
  The LGVI preserves the constraint intrinsically, whereas extrinsic
  explicit schemes exhibit constraint drift.
  \emph{Bottom:} Exact preservation of the discrete momentum map by the
  LGVI.
  }
  \label{fig:structure_preservation}
\end{figure}

\section{Quantum geometric model predictive control}

Model predictive control (MPC) provides a systematic framework for the
closed-loop control of dynamical systems subject to constraints, by
repeatedly solving finite-horizon optimal control problems and applying
the resulting control in a receding-horizon fashion. In the quantum
setting, however, the direct application of MPC faces intrinsic
difficulties arising from the geometry of the state space, the presence
of unobservable degrees of freedom, and the need for smooth control
signals compatible with physical implementations. 

Next, we introduce a \emph{quantum geometric model predictive
control} (QGMPC) framework, in which prediction, optimization, and
feedback are formulated intrinsically on the reduced quantum state
manifold. Pure quantum states are represented as points on the
projective Hilbert space $\mathcal M=\mathbb P(\mathcal H)$, endowed with
the Fubini--Study metric, and the global phase symmetry is eliminated
\emph{a priori} by symmetry reduction. This viewpoint removes
redundant degrees of freedom and yields a well-conditioned configuration
space for receding-horizon optimization. Unlike standard quantum control formulations based on first-order
dynamics, the proposed QGMPC framework is built upon a second-order
geometric model on $\mathcal M$, in which control inputs act at the level
of covariant acceleration. This choice is motivated by the need to
generate intrinsically smooth state trajectories, to penalize rapid
variations of the quantum state, and to ensure numerical robustness of
the finite-horizon optimization problem.

Let $\psi(t)\in\mathbb P(\mathcal H)$ denote the pure quantum state and
$v(t)\in T_{\psi(t)}\mathbb P(\mathcal H)$ its velocity. The dynamics are
described intrinsically by $\dot \psi(t) = v(t), \, \nabla^{FS}_{\dot \psi(t)} v(t) = u(t)$, where $\nabla^{FS}$ denotes the Levi--Civita connection associated with
the Fubini--Study metric and
$u(t)\in T_{\psi(t)}\mathbb P(\mathcal H)$ is a control input interpreted
as a covariant acceleration of the quantum state. The prediction problem underlying QGMPC is therefore
posed as the minimization of 
\begin{equation}\label{eq:qgmpc_continuous_cost}
\int_0^{T_p}
\left(
\frac12 \bigl\|\nabla^{FS}_{\dot \psi}\dot \psi\bigr\|_{FS}^2
+ V(\psi)
\right)\,dt,
\end{equation}
subject to the second-order dynamics and prescribed initial quantum
state and velocity. This formulation balances trajectory smoothness and
constraint avoidance in a manner that is intrinsic to the geometry of
$\mathbb P(\mathcal H)$ and fully consistent with quantum symmetry
reduction. Although posed as a continuous-time variational problem, the finite-horizon
action defines a predictive
model in the sense of model predictive control: for any measured or
estimated initial condition $(\psi_0,v_0)\in T\mathbb P(\mathcal H)$ and
prediction horizon $T_p>0$, it induces an optimal state trajectory
$\psi^\star(\cdot)$ on $[0,T_p]$, which can be repeatedly recomputed and
implemented in a receding-horizon fashion.

For feedback implementation, the problem is embedded in a discrete-time
model predictive control framework. Let $h>0$ denote the sampling period
and let $N_p\in\mathbb N$ define the prediction horizon $T_p=N_p h$.
Given the measured or estimated state $\psi_k\in\mathbb P(\mathcal H)$ at
time step $k$, we consider predicted state sequences $\{\psi_{k+j|k}\}_{j=0}^{N_p} \in (\mathbb P(\mathcal H))^{N_p+1}$, with initial condition $\psi_{k|k}=\psi_k$.

A discrete Lagrangian
$L_d:\mathbb P(\mathcal H)\times \mathbb P(\mathcal H)\times \mathbb P(\mathcal H)\to\mathbb R$ 
is introduced to approximate \eqref{eq:qgmpc_continuous_cost}. This
yields the finite-horizon discrete-time optimal control problem
\begin{align}
\min_{\{\psi_{k+j|k}\}} \;
J_{N_p}(\psi_k,v_k)
=
\sum_{j=2}^{N_p-2}
&L_d(\psi_{k+j-1:k+j+1|k})\nonumber
\\&+
\Phi(\psi_{k+N_p|k})\
\label{eq:quantum_mpc_problem} \\
\text{s.t.}\quad
 \psi_{k|k}=\psi_k,
\,\,
\psi_{k+j|k}&\in\mathbb P(\mathcal H),
\; j=0,\dots,N_p .\nonumber
\end{align} where $\psi_{a:b|k} := (\psi_{a|k},\dots,\psi_{b|k})$ and where the 
discrete velocities and accelerations are defined implicitly
through the chosen variational discretization on $\mathbb P(\mathcal H)$.
The terminal cost $\Phi:\mathbb P(\mathcal H)\to\mathbb R$ penalizes the
Fubini--Study distance to a desired target state, and no hard terminal
constraint is imposed. The necessary optimality conditions for
\eqref{eq:quantum_mpc_problem} are given by the discrete second-order Euler--Lagrange
equations for $L_d$, yielding a coupled nonlinear
system along the prediction horizon. At each sampling instant, this
system is solved using an iterative method initialized with the solution
from the previous horizon. The QGMPC action is defined by applying only the first
segment of the predicted trajectory, after which the horizon is
shifted forward and the procedure is repeated.

\begin{theorem}
\label{thm:quantum_mpc_stability}
Consider $\mathbb P(\mathcal H)$ endowed
with the Fubini--Study metric and the discrete-time closed-loop system
generated by the receding-horizon implementation of
\eqref{eq:quantum_mpc_problem}. Assume that:
(i) $L_d$ is continuous and positive definite
with respect to a desired equilibrium $\psi^\ast\in\mathbb P(\mathcal H)$; (ii) the potential $V$ attains a strict minimum at $\psi^\ast$ and is
positive definite in a neighborhood of $\psi^\ast$;
(iii) for a sufficiently large prediction horizon $N_p$, the finite-horizon
problem~\eqref{eq:quantum_mpc_problem} is feasible for all
$\psi_k$ in a neighborhood of $\psi^\ast$.
(iv) $\Phi$ is continuous, positive definite with
respect to $\psi^\ast$, and locally equivalent to $d_{FS}^2(\cdot,\psi^\ast)$.

Then the optimal value function
$J^\star_{N_p}(\psi_k)$ associated with
\eqref{eq:quantum_mpc_problem} is non-increasing along closed-loop
trajectories and $J^\star_{N_p}(\psi_{k+1})
\le
J^\star_{N_p}(\psi_k)
-
\ell_d(\psi_k)$, where $\ell_d$ denotes the discrete stage cost induced by $L_d$.
Consequently, the equilibrium $\psi^\ast$ is practically stable for the
closed-loop system, and the resulting state trajectory remains
bounded and converges to a neighborhood of $\psi^\ast$.
\end{theorem}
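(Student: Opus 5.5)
The plan is to follow the classical value-function argument for receding-horizon control. The candidate Lyapunov function is the optimal value $J^\star_{N_p}$, and the decrease inequality comes from a \emph{shifted-and-extended} feasible candidate at time $k+1$. Fix $\psi_k$ in the neighbourhood of $\psi^\ast$ given by (iii). Let $\{\psi^\star_{k+j|k}\}_{j=0}^{N_p}$ be an optimal predicted sequence; the minimum is attained because $\mathbb P(\mathcal H)$ is compact and $L_d$, $\Phi$ are continuous by (i) and (iv). The closed-loop update applies the first segment, so $\psi_{k+1}=\psi^\star_{k+1|k}$. I then define the candidate at time $k+1$ by $\tilde\psi_{k+1+j|k+1}:=\psi^\star_{k+1+j|k}$ for $j=0,\dots,N_p-1$. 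I append one extra terminal point $\tilde\psi_{k+1+N_p|k+1}:=\psi^+$, obtained by one step of a local terminal law. Concretely, $\psi^+:=\exp_{\psi^\star_{k+N_p|k}}\bigl(\lambda\log_{\psi^\star_{k+N_p|k}}\psi^\ast\bigr)$ with a fixed $\lambda\in(0,1]$, i.e.\ a geodesic step toward $\psi^\ast$ along the Fubini--Study geodesic, using the logarithm of Section~\ref{subsec:VI_projective}. This sequence satisfies $\tilde\psi_{k+1|k+1}=\psi_{k+1}$ and lies in $\mathbb P(\mathcal H)$, so it is admissible for \eqref{eq:quantum_mpc_problem} at time $k+1$.

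Comparing costs, every stage term of the candidate except the first dropped one and the newly appended one coincides with a term of the optimal cost at time $k$. Hence
\[
J^\star_{N_p}(\psi_{k+1})\le J_{N_p}(\tilde\psi)= J^\star_{N_p}(\psi_k)-\ell_d(\psi_k)+\Bigl[L_d^{\mathrm{new}}+\Phi(\psi^+)-\Phi(\psi^\star_{k+N_p|k})\Bigr].
\]
Here $\ell_d(\psi_k)$ denotes the first stage term $L_d(\psi^\star_{k+1:k+3|k})$ of the optimal sequence, which is the stage cost induced by $L_d$ that is removed by the shift. $L_d^{\mathrm{new}}$ is the stage term involving the three last points of the candidate. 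Since $L_d$ is a second-order Lagrangian, one stage term straddles the junction; I will absorb that boundary term into the bracket as well, possibly redefining $\ell_d$ to be the first stage term together with the terminal-junction mismatch. The theorem then reduces to the \emph{terminal decrease condition}: the bracket must be $\le 0$ for all terminal points in a neighbourhood $\mathcal N$ of $\psi^\ast$.

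The main obstacle is this terminal inequality, and I would handle it with (ii) and (iv). Write $d:=d_{FS}(\psi^\star_{k+N_p|k},\psi^\ast)$. The geodesic step gives $d_{FS}(\psi^+,\psi^\ast)=(1-\lambda)d$. By (iv) there are constants $c_1\le c_2$ with $c_1d_{FS}^2\le\Phi\le c_2d_{FS}^2$ near $\psi^\ast$, so that
\[
\Phi(\psi^+)-\Phi(\psi^\star_{k+N_p|k})\le-\bigl(c_1-c_2(1-\lambda)^2\bigr)d^2 .
\]
The appended stage cost has two parts. Its acceleration part is $O(\lambda^2d^2/h^3)$, by the explicit form of $\mathcal A_k$ through the Riemannian logarithm. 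Its potential part is $hV\le hC_Vd^2$, since by (ii) $V$ is smooth with a strict minimum at $\psi^\ast$; taking $V(\psi^\ast)=0$, which is the normalisation implicit in positive definiteness, its Taylor expansion is quadratic. The plan is to pick $\lambda$, and if necessary to scale $\Phi$ within the equivalence class allowed by (iv), so that $c_1-c_2(1-\lambda)^2$ dominates these quadratic bounds. Then the bracket is $\le 0$ on $\mathcal N$. If the junction term of the second-order Lagrangian resists this bound, my fallback is to let the appended step be the ``stay put'' extension $\psi^+=\psi^\star_{k+N_p|k}$. This removes the $\Phi$-difference but keeps the acceleration bounded by the last predicted increment, and I would control that increment using optimality at time $k$.

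To guarantee that $\psi^\star_{k+N_p|k}\in\mathcal N$, I use (iii) with $N_p$ large. Since $J^\star_{N_p}(\psi_k)$ is bounded on the feasibility neighbourhood, the positive-definite stage costs from (i) force the optimal predicted states to approach $\psi^\ast$ by the end of the horizon. Once the decrease inequality holds, $J^\star_{N_p}$ is non-increasing along the closed loop. Telescoping gives $\sum_k\ell_d(\psi_k)\le J^\star_{N_p}(\psi_0)<\infty$, so $\ell_d(\psi_k)\to0$. Positive definiteness of $\ell_d$ (from (i)), together with compactness of $\mathbb P(\mathcal H)$, then yields boundedness of the trajectory and convergence to the neighbourhood of $\psi^\ast$ on which the terminal estimate holds. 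This is the stated practical stability.
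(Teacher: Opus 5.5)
\textbf{There is a genuine gap.} Your skeleton (shift, append one point, and reduce everything to the sign of the bracket $L_d^{\mathrm{new}}+\Phi(\psi^+)-\Phi(\psi^\star_{k+N_p|k})$) is the right one. The step that is supposed to make this bracket nonpositive fails.

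\textbf{The junction term is mis-estimated.} It is not $O(\lambda^2d^2)$. Write $b:=\psi^\star_{k+N_p|k}$ and $a:=\psi^\star_{k+N_p-1|k}$. The discrete acceleration of the junction term is
\[
\mathcal A=\frac{1}{h^2}\Bigl(\lambda\log_{b}\psi^\ast-\log_{b}a\Bigr).
\]
Here $\delta:=\|\log_b a\|$ is the last predicted increment, and nothing ties it to $d=d_{FS}(b,\psi^\ast)$. Hence $L_d^{\mathrm{new}}\ge\frac{1}{2h^5}(\delta-\lambda d)^2$, whereas the only negative contribution has size $c_1d^2$. With the paper's literal range $j=2,\dots,N_p-2$ it is worse: the newly created stage term is $L_d(\psi^\star_{k+N_p-2:k+N_p|k})$, which does not contain $\psi^+$ at all.

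\textbf{No choice of $\psi^+$ repairs this.} Suppose the optimal prediction ends at $b=\psi^\ast$ with $a\neq\psi^\ast$. Then for every $\psi^+$ the bracket equals $\frac{1}{2h^5}\|\log_{\psi^\ast}\psi^+-\log_{\psi^\ast}a\|^2+\Phi(\psi^+)>0$. The obstruction is structural: the second-order stage cost depends on the last \emph{two} predicted points, while $\Phi$ sees only the last one. So rescaling $\Phi$ cannot help. In any case (iv) only asserts that some constants $c_1\le c_2$ exist; you are not free to choose them.

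\textbf{The fallbacks do not close it either.}
\begin{itemize}
\item The stay-put extension leaves the residual $L_d(a,b,b)\ge0$ uncompensated. Optimality at time $k$ does not force the final increment to vanish.
\item Redefining $\ell_d$ to absorb the mismatch changes the inequality you are asked to prove.
\item The claim that boundedness of $J^\star_{N_p}$ pushes $\psi^\star_{k+N_p|k}$ into $\mathcal N$ is a turnpike property not contained in (i)--(iv). A bounded sum of stage costs only makes \emph{some} predicted states close to $\psi^\ast$. The terminal state is constrained only through $\Phi(\psi^\star_{k+N_p|k})\le J^\star_{N_p}(\psi_k)$.
\end{itemize}

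\textbf{Comparison with the paper.} You have not overlooked an idea that the paper supplies. Its proof uses the same shifted sequence, without specifying a new terminal point, and omits the junction stage term. It then disposes of $\Phi(\psi^\star_{k+N_p|k})-\Phi(\psi^\star_{k+N_p-1|k})$ by the bare assertion that positive definiteness of $\Phi$ ``does not destroy the descent property''. That is exactly the terminal decrease condition you isolated, and (i)--(iv) do not imply it.

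\textbf{What would close the argument.} The displayed inequality needs an additional hypothesis. One option is a terminal cost $\Phi(\psi_{N_p-1},\psi_{N_p})$ depending on terminal position and velocity that is a local control-Lyapunov function near $(\psi^\ast,\psi^\ast)$:
\[
\min_{\psi^+}\Bigl[L_d(\psi_{N_p-1},\psi_{N_p},\psi^+)+\Phi(\psi_{N_p},\psi^+)\Bigr]\le\Phi(\psi_{N_p-1},\psi_{N_p}).
\]
Alternatively, the conclusion must be weakened to a relaxed decrease $J^\star_{N_p}(\psi_{k+1})\le J^\star_{N_p}(\psi_k)-\beta\,\ell_d(\psi_k)$ with $\beta\in(0,1)$. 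Long-horizon MPC without terminal ingredients delivers this under an explicit controllability assumption. With either ingredient in place, your remaining steps close the argument: existence of optimisers by compactness, recursive feasibility, and telescoping.
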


\begin{remark}
Theorem~\ref{thm:quantum_mpc_stability} shows that the optimal value
function of the finite-horizon variational problem acts as a Lyapunov
function for the closed-loop dynamics.
\end{remark}

\begin{proof}
Let $\psi^\ast\in\mathbb P(\mathcal H)$ denote the desired equilibrium state. By Assumption~(3), for a sufficiently large prediction
horizon $N_p$, the finite-horizon problem
\eqref{eq:quantum_mpc_problem} is feasible for all $\psi_k$ in a
neighborhood $\mathcal D\subset\mathbb P(\mathcal H)$ of $\psi^\ast$.
Since the receding-horizon implementation applies only the first segment
of an optimal predicted trajectory and shifts the horizon forward,
feasibility at time $k$ implies feasibility at time $k+1$ for all
$\psi_k\in\mathcal D$. Hence, the MPC  problem remains
well-posed along the closed-loop evolution. Define the optimal value function $J^\star_{N_p}(\psi_k)
=
\min_{\{\psi_{k+j|k}\}}
J_{N_p}(\psi_k)$, where $J_{N_p}$ is given by \eqref{eq:quantum_mpc_problem}. By
Assumption~(1), $L_d$ is continuous and positive
definite with respect to $\psi^\ast$. It follows that
$J^\star_{N_p}$ is continuous on $\mathcal D$ and positive definite with
respect to $\psi^\ast$.

Let $\{\psi^\star_{k+j|k}\}_{j=0}^{N_p}$ denote an optimal predicted
trajectory at time $k$, attaining the minimum
$J^\star_{N_p}(\psi_k)$. For $j=0,\dots,N_p-1$, consider the shifted sequence $\tilde\psi_{k+1+j|k+1} := \psi^\star_{k+1+j|k}$, which is feasible for the optimization problem at time $k+1$ by
construction and feasibility propagation. Evaluating the cost of this shifted sequence yields $J_{N_p}(\tilde\psi_{k+1})
=
J^\star_{N_p}(\psi_k)
-
\ell_d(\psi_k)
+
\Phi(\psi^\star_{k+N_p|k})
-
\Phi(\psi^\star_{k+N_p-1|k})$, where $\ell_d(\psi_k)$ denotes the discrete stage cost associated with
the first segment of the optimal predicted trajectory. By the positive
definiteness of the terminal cost $\Phi$ and its minimum at $\psi^\ast$,
the terminal contribution does not destroy the descent property. By optimality of
$J^\star_{N_p}(\psi_{k+1})$, we obtain $J^\star_{N_p}(\psi_{k+1})
\le
J^\star_{N_p}(\psi_k)
-
\ell_d(\psi_k)$.

By Assumptions~(1) and~(2), the stage cost $\ell_d$ is positive definite
with respect to $\psi^\ast$ in a neighborhood of $\psi^\ast$, and
$\ell_d(\psi_k)=0$ if and only if $\psi_k=\psi^\ast$. Therefore,
$J^\star_{N_p}$ is non-increasing along closed-loop trajectories and
strictly decreasing away from $\psi^\ast$. Since $J^\star_{N_p}$ is continuous, positive definite, and
non-increasing along the closed-loop evolution, standard discrete-time
Lyapunov arguments on manifolds imply that $\psi^\ast$ is a practically
stable equilibrium for the closed-loop system. In particular, for any
sufficiently small neighborhood $\mathcal U$ of $\psi^\ast$, there
exists a neighborhood $\mathcal V\subset\mathcal D$ such that any
closed-loop trajectory starting in $\mathcal V$ remains in
$\mathcal U$ for all future times and converges to a compact subset of
$\mathcal U$.
\end{proof}

We illustrate the proposed QGMPC framework with the case of a single qubit in Figure~\ref{fig:qgmpc_qubit}.
Starting from the south pole of the Bloch sphere, the quantum state is steered toward a target state located close to the boundary of a forbidden spherical cap centered at the north pole. $\Phi$ is implemented as a quadratic penalty on the
Fubini--Study distance to the target state, with terminal weight
$\alpha_T=25$.
The obstacle is encoded through a smooth repulsive potential of strength $\tau = 30$ and radius $D = 0.35$, while the prediction horizon is set to $N_p = 10$ with a sampling period $h = 0.05$. 
In the proposed QGMPC scheme, the predictive model at each sampling instant is defined by a LGVI, and therefore each finite-horizon prediction inherits the geometric structure of the underlying quantum dynamics.
The closed-loop evolution can be interpreted as a receding-horizon concatenation of local quantum Riemannian cubics with obstacle avoidance. 

\begin{figure}[h!]
  \centering
  \includegraphics[width=0.9\linewidth]{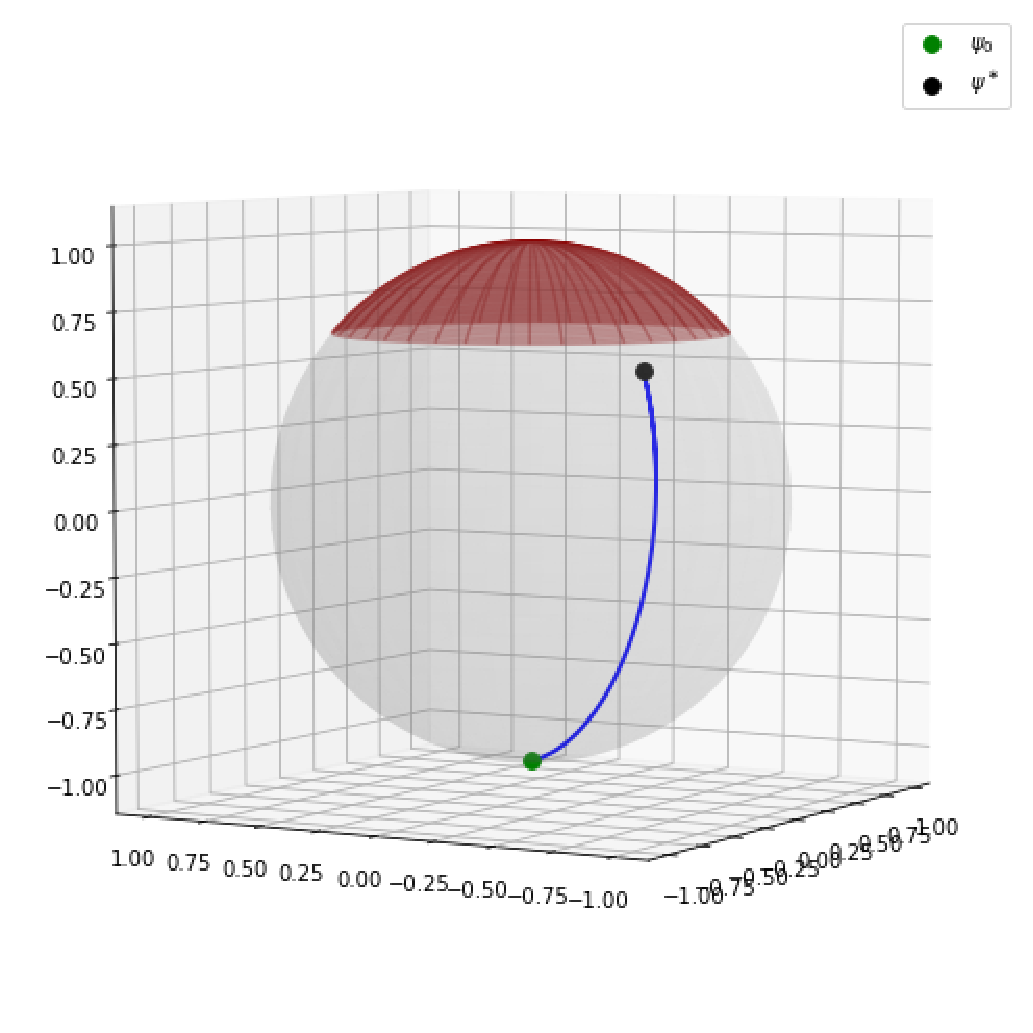}
  \caption{Closed-loop QGMPC for a single qubit.}
  \label{fig:qgmpc_qubit}
\end{figure}

To highlight the closed-loop nature of the proposed QGMPC framework, we compare its behavior with an open-loop strategy under a small state perturbation.
Both rely on the same geometric model, obstacle-avoidance potential, and LGVI-based discretization.
In the open-loop case, the trajectory is computed once and subsequently executed without correction, so deviations induced by the perturbation persist along the evolution.  In contrast, QGMPC recomputes a finite-horizon optimal prediction at each sampling instant using the LGVI, allowing the controller to compensate for the perturbation and steer the state back toward the desired target while respecting the forbidden region.

\begin{figure}[h!]
  \centering
  \includegraphics[width=0.9\linewidth]{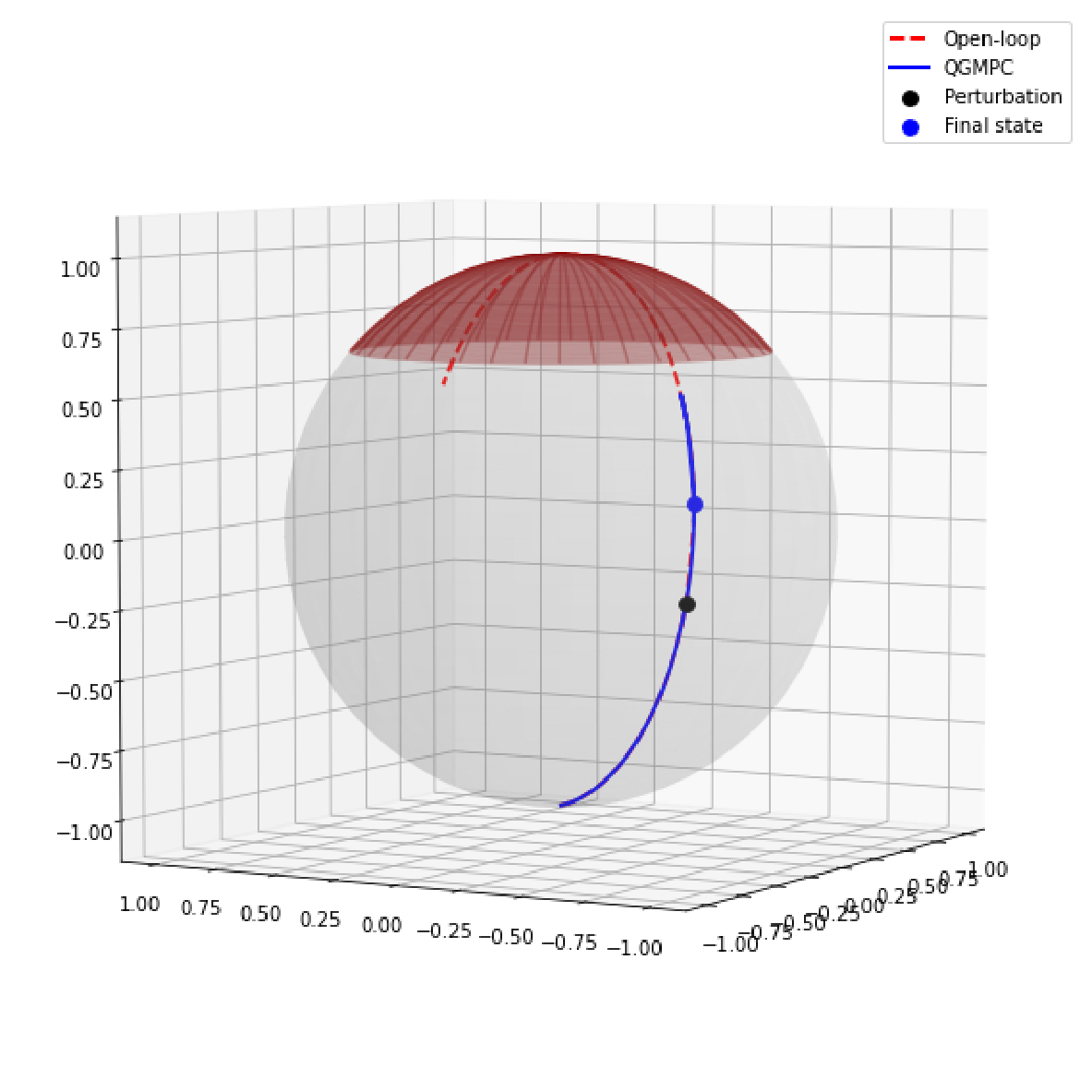}
  \caption{Closed-loop QGMPC vs. open loop}
  \label{fig:qgmpc_qubit2}
\end{figure}

We illustrate how the proposed QGMPC framework combines intrinsic geometric prediction with receding-horizon feedback to generate smooth, constraint-consistent trajectories in Fig. \ref{fig:qgmpc_qubit2}. The closed-loop QGMPC trajectory does not converge to the final state
exactly, nor to the boundary of the forbidden region. Instead, it
approaches a practically stable equilibrium determined by the balance
between terminal accuracy, obstacle avoidance, and trajectory
smoothness, in full agreement with the stability result of
Theorem~\ref{thm:quantum_mpc_stability}.

\section*{Conclusions}

We have presented a geometric framework for quantum trajectory generation
and control based on second-order variational principles on the projective
Hilbert space of pure quantum states. By penalizing covariant acceleration
and incorporating smooth obstacle-avoidance potentials, the approach
yields intrinsically smooth and constraint-aware quantum trajectories. The proposed formulation was embedded into a geometric model predictive
control scheme using structure-preserving variational discretizations.
A Lyapunov-type argument establishes practical closed-loop stability in the
absence of hard terminal constraints, and numerical simulations on the
Bloch sphere illustrate the robustness of the resulting QGMPC strategy in
the presence of disturbances.

\bibliographystyle{plain}        
\bibliography{autosam}           



\end{document}